\documentclass[12pt]{article}
\usepackage{graphicx}
\usepackage{amsfonts}
\usepackage{amsmath}
\usepackage{amssymb}
\usepackage{url}
\usepackage{tikz}
\usepackage{pgfplots}
\pgfplotsset{compat=1.17}
\usepackage{dsfont} 

\usepackage{indentfirst}
\usepackage{enumerate}
\usepackage[normalem]{ulem}
\usepackage{mathrsfs}
\usepackage{multirow}
\usepackage{diagbox}
\usepackage{bbold}
\usepackage{stackengine}
\usepackage{subfigure}
\renewcommand{\thesubfigure}{\alph{subfigure}}
\makeatletter
\renewcommand{\@thesubfigure}{(\thesubfigure)\hskip\subfiglabelskip}
\makeatother
\usepackage[colorlinks=true,citecolor=blue]{hyperref}
\usepackage{amsthm}
\usepackage{color}

\definecolor{DarkGreen}{rgb}{0.2,0.6,0.2}

\definecolor{purple}{rgb}{0.6,0.3,0.8}

\usepackage{natbib} 
\bibliographystyle{apalike}
\usepackage{comment}
\allowdisplaybreaks

\addtolength{\textheight}{.5\baselineskip}
\def\d{\mathrm{d}}

\newcommand{\E}{\mathbb{E}}

\newcommand{\R}{\mathbb{R}}

\newcommand{\dsquare}{\mathop{  \square} \displaylimits}

\newcommand{\p}{\mathbb{P}}
\newcommand{\X}{\mathcal{X}}

\renewcommand{\H}{\mathcal{H}}

\newcommand{\id}{\mathds{1}}

\renewcommand{\ge}{\geqslant}
\renewcommand{\le}{\leqslant}
\renewcommand{\geq}{\geqslant}
\renewcommand{\leq}{\leqslant}
\renewcommand{\epsilon}{\varepsilon}
\newcommand{\esssup}{\mathrm{ess\mbox{-}sup}}
\newcommand{\essinf}{\mathrm{ess\mbox{-}inf}}
\renewcommand{\cdots}{\dots}
\theoremstyle{plain}
\newtheorem{theorem}{Theorem}
\newtheorem{corollary}{Corollary}
\newtheorem{lemma}{Lemma}
\newtheorem{proposition}{Proposition}
\theoremstyle{definition}
\newtheorem{definition}{Definition}
\newtheorem{example}{Example}

\theoremstyle{remark}


\DeclareMathOperator*{\argmin}{arg\,min}

\topmargin=-0.5cm 
\oddsidemargin=0.6cm 
\evensidemargin=0.6cm
\textwidth=16cm 
\textheight=22.35cm
\parindent=4ex
\parskip=0.5ex
\footnotesep=0.5cm
\setlength{\skip\footins}{1.5cm}

\newcommand{\dboxminus}{\mathop{  \boxminus} \displaylimits}
\newcommand{\dboxplus}{\mathop{  \boxplus} \displaylimits}
\newcommand{\lozen}{\mathop{  \lozenge} \displaylimits}



\usepackage{setspace}


\setlength{\bibsep}{1pt}
\usepackage{footmisc}
\setlength{\footnotesep}{0.8\baselineskip}

\title{Counter-monotonic Risk Sharing with\\ Heterogeneous Distortion Risk Measures\vspace{0.2cm}}

\author{
Mario Ghossoub\thanks%
  {Department of Statistics and Actuarial Science,
  University of Waterloo,
  Waterloo, Ontario, Canada.
  E-mail: \href{mailto:mario.ghossoub@uwaterloo.ca}{
mario.ghossoub@uwaterloo.ca}.}
  \and Qinghua Ren\thanks%
  {Department of Statistics and Actuarial Science,
  University of Waterloo,
  Waterloo, Ontario, Canada.
  E-mail: \href{mailto:qinghua.ren@uwaterloo.ca}{qinghua.ren@uwaterloo.ca}.}
  \and Ruodu Wang\thanks%
  {Department of Statistics and Actuarial Science,
  University of Waterloo,
  Waterloo, Ontario, Canada.
  E-mail: \href{mailto:wang@uwaterloo.ca}{wang@uwaterloo.ca}.}
  }


\begin{document}
\maketitle

\begin{abstract}
We study  risk sharing among agents with preferences modeled by heterogeneous distortion risk measures, who are not necessarily risk averse. Pareto optimality for agents using risk measures is often studied through the lens of inf-convolutions, because allocations that attain the inf-convolution are Pareto optimal, and the converse holds true under translation invariance.
Our main focus is on groups of agents who   exhibit varying levels of risk seeking. Under mild assumptions, we derive explicit solutions for the unconstrained inf-convolution and the counter-monotonic inf-convolution, which can be represented by a generalization of distortion risk measures. 

\textbf{Keywords:} Pareto optimality; risk sharing; counter-monotonic improvement; risk seeking; inf-convolution
\end{abstract}


\section{Introduction}
 
Risk-exchange markets, such as insurance, reinsurance, or financial markets,  are central  to modern economics.  The primary focus of studying such markets has traditionally been the determination of an optimal, or efficient redistribution of the aggregate market risk, through contracts or trading mechanisms,  among market participants, henceforth referred to as agents. The seminal work of \cite{borch1962} and \cite{wilson1968theory} showed that within the framework of Expected-Utility Theory (EUT), Pareto-optimal allocations between risk-averse agents are comonotonic, and they can therefore be expressed as nondecreasing functions of the aggregate market risk. This is a cornerstone result in the theory of risk sharing, and it is often seen as a foundational justification for risk pooling, since each agent's risk allocation at an optimum depends only on the realization of the aggregate risk. Numerous extensions beyond EUT have been studied in the literature, with the perennial assumption of (strong) risk aversion, that is, monotonicity with respect to the concave order. The literature is vast, and we refer for instance to the work of \cite{chateauneuf2000optimal}, \cite{Dana2002, dana2004ambiguity}, \cite{tsanakas2006risk}, \cite{decastro2011ambiguity}, \cite{beissner2023optimal}, and \cite{RavanelliSvindland2014} for several models of ambiguity-sensitive preferences.

A milestone result in this direction is the so-called \textit{comonotonic improvement theorem} (e.g., \cite{landsberger1994co}, \cite{dana2004ambiguity}, \cite{ludkovski2008comonotonicity}, \cite{carlier2012pareto}, or \cite{denuitetal2023comonotonicity}), an important implication of which is that risk-averse agents always prefer comonotone allocations, and that Pareto optima are comonotonic under strict risk aversion. This naturally led \cite{boonen2021competitive} to examine the so-called comonotone market, 
an incomplete market in which only comonotonic allocations are feasible. Pareto-optimal risk sharing in comonotone markets was also studied by \cite{liu2020weighted} and \cite{ghossoubzhu2024}.

In the risk measurement literature, Pareto-optimal risk sharing between agents with convex or coherent risk measures has been widely studied as well. We refer to \cite{barrieu2005inf}, \cite{acciaio2007optimal}, \cite{jouini2008optimal}, \cite{filipovic2008optimal},  \cite{MastrogiacomoRosazza}, and the references therein, for instance. Additionally, Pareto-optimal risk sharing between agents with quantile-based risk measures that are not necessarily convex was examined by \cite{embrechts2018quantile,embrechts2020quantile}, \cite{liu2020weighted}, \cite{liebrich2024risk} and \cite{ghossoubzhuchong2024}, for instance.

The characterization of optimal allocations in risk sharing markets involving agents who are not risk-averse remains relatively underexplored. 
Recent studies on quantile-based risk sharing, including \cite{embrechts2018quantile, embrechts2020quantile} and \cite{weber2018solvency}, identified a pairwise counter-monotonic structure --- the opposite of comonotonicity --- in an optimal allocation. Furthermore, \cite{lauzier2023risk} provided explicit Pareto-optimal allocations among agents using the inter-quantile difference, demonstrating that an optimal allocation exhibits a mixture of pairwise counter-monotonic structures. 
As a dependence concept, pairwise counter-monotonicity has been studied by \cite{dall1972frechet}, \cite{dhaene1999safest} and \cite{cheung2014characterizing}. 
Parallel to the comonotone improvement theorem, \cite{lauzier2024negatively} established the so-called \textit{counter-monotonic improvement theorem}, leading to an implication that counter-monotonic allocations will always be preferred by risk-seeking agents. Based on the counter-monotonic improvement theorem, \cite{ghossoub2024counter} provided a systematic study of risk sharing in markets where only counter-monotonic allocations are allowed, and they gave an explicit characterization of optimal allocations when agents are risk-averse and risk-seeking.  Their analysis assumes that  the preferences of the agents are modelled by a common distortion risk measure.

This paper extends the previous work by examining a market where agents may have heterogeneous risk preferences, and they are not necessarily risk averse. It is notable that although agents within a group may differ in their levels of risk aversion, we assume that all agents in the same group are either all risk-averse or all risk-seeking. We do not consider cases where both risk-averse and risk-seeking agents are combined in a single group. The key novel insights and extensions are summarized as follows: 
\begin{itemize}
    \item[(i)] In the homogeneous case, \cite{ghossoub2024counter} established a universal ordering among the three versions of inf-convolution: unconstrained, counter-monotonic, and comonotonic, from the smallest to the largest. In contrast to the homogeneous case, the ordering between the comonotonic and counter-monotonic versions of the inf-convolution depends on the distortion functions. 
    \item[(ii)] When agents have identical concave distortion functions (the case of risk aversion), the three versions of inf-convolution have identical values, as shown in Theorem 3 of \cite{ghossoub2024counter}. 
    However, for a group of risk-averse agents with different levels of risk aversion, counter-monotonic allocations are generally not Pareto optimal, leading to a gap between the three versions of inf-convolution.  
    \item[(iii)] Under some mild conditions, we derive an explicit formula for the counter-monotonic inf-convolution in the case where agents are risk seeking, characterized by different convex distortion functions.  
\end{itemize}

 The rest of the paper is organized as follows. Sections \ref{sec:pre} and  \ref{sec:counter} contain preliminaries on risk measures and on risk sharing problems, respectively. In particular, Section \ref{sec:counter} recalls the counter-monotonic improvement theorem (reported as Theorem \ref{theorem:counter_impro}) and some related discussions on counter-monotonicity. In Section  \ref{sec:concave}, we analyze the counter-monotonic risk sharing problem and obtain general relations for different choices of the distortion functions (Theorem \ref{thm:1}).
 We specialize to risk-seeking agents in  Section \ref{sec:risk-seeking}. Based on the counter-monotonic improvement theorem, counter-monotonic inf-convolutions are determined explicitly for risk-seeking agents (Theorem \ref{thm:convex}). 
Section \ref{sec:con} concludes the paper. 

\section{Preliminaries}\label{sec:pre}

\subsection{Risk measures and basic terminology}
 Let $(\Omega, \mathcal{F}, \mathbb{P})$ be an atomless probability space and $\mathcal{X}$ a convex cone of random variables on this space.  Section \ref{sec:concave} considers $\mathcal{X}=L^\infty$, the set of all essentially bounded random variables, and Section  \ref{sec:risk-seeking}   considers $\mathcal{X}=L^+$ or $\mathcal{X}=L^-$, where $L^+$ (resp.~$L^-$) represents the sets of nonnegative (resp.~nonpositive) essentially bounded random variables. 
 Almost surely equal random variables are treated as identical. 
Throughout, the random variable $X\in \X$ represents losses, and its negative values represent gains.  
 We denote by $\mathbb{1}_A$  the indicator function for an event $A \in \mathcal{F}$. 
 Let $$\H^{\rm BV}=\{h:  [0,1]\to \R \mid  h \text{ is of bounded variation and } h(0)=0 \}.$$
 Next, we present the definition of a distortion riskmetric.

 \begin{definition}
 A \emph{distortion riskmetric} is a mapping $\rho_h: \mathcal{X} \to \mathbb{R}$ given by 
 \begin{align*}
\rho_h(X)=\int X \d\left( h\circ \p \right)=\int_0^\infty h(\p(X \geq  x))\, \d x + \int_{-\infty}^{0} (h(\p(X \geq  x))-h(1) )\,\d x, 
\end{align*} 
for some $h\in \H^{\rm BV}$. 
\end{definition}

We note that the elements of $\H^{\rm BV}$ are not necessarily monotone. If we constrain $h \in \H^{\rm BV}$ to be increasing and normalized, that is, $h\in \H$, where 
$$\H=\{h:  [0,1]\to [0,1]\mid  h \,\text{is increasing and } h(0)=1-h(1)=0 \},$$ 
then the distortion riskmetric $\rho_h$ for $h\in\H $ is a distortion risk measure. 
Here and throughout, terms like ``increasing" or ``decreasing" are in the non-strict sense.  
In this paper, the agents' risk preferences are modeled by the class of distortion risk measures. The more general class of distortion riskmetrics is introduced since it will be useful in our further analysis. In particular, we show in the setting of Section \ref{sec:risk-seeking} that the comonotonic inf-convolution of distortion risk measures is a distortion risk measure, whereas their counter-monotonic inf-convolution is a distortion riskmetric.

The \textit{dual} $\tilde h$ of a given $h\in \mathcal H^{\rm BV}$, which will be useful in many of our results, is defined as 
$$
\tilde h (t) = h(1) - h(1-t), \ \mbox{~~~~for all  $t \in [0,1]$,}
$$
and it is an element of $\mathcal H^{\rm BV}$.
If $h$ is in $\mathcal H$, then so is  $\tilde h$. The dual of $\tilde h$ is equal to $h$, and the two corresponding distortion riskmetrics are connected via the equality 
$$\rho_h(X) = - \rho_{\tilde h}(-X), \mbox{~~~~for all  $X\in \X$.}$$

We now recall some properties of distortion riskmetrics that we use throughout. A distortion riskmetric $\rho_h$ may have the following   properties as a functional $\rho: \mathcal{X}\mapsto \mathbb{R}$.
\begin{enumerate}[(a)]
    \item Law-invariance:  $\rho(X)=\rho(Y)$ if $X$ and $Y$ have the same distribution, i.e., $X \stackrel{\mathrm{d}}{=} Y$; 
    \item  Positive homogeneity: $\rho(\lambda X)=\lambda \rho(X)$ for any $\lambda>0$;
    \item Translation invariance: $\rho (X + c) = \rho (X) +c$ for $c\in \mathbb{R}$ and $X+c\in \X$;
    \item Comonotonic additivity: $\rho(X+Y)=\rho(X)+\rho(Y)$ if $X$ and $Y$ are comonotonic;
    \item Subadditivity: $\rho(X+Y) \leqslant \rho(X)+\rho(Y)$;
    \item Convex order consistency: $\rho(X)\le \rho(Y)$ if $X\le_{\rm cx} Y$, where the inequality is the convex order, meaning $\E[u(X)]\le \E[u(Y)]$ for all convex functions $u$ such that the two expectations are well-defined;
    \item Monotonicity: $\rho(X) \le \rho(Y)$ if $X\le Y$.
\end{enumerate} 
In fact, these properties do not always hold for a distortion riskmetric $\rho_h$.  To be more specific, all distortion riskmetric $\rho_h$ satisfy (a), (b), and (d). 
Property (c) holds true if $h(1)=1$.
By \citet[Theorem 3]{wang2020characterization}, conditions (e) and (f) are equivalent to the concavity of $h$. 
Condition (g) is equivalent to increasing monotonicity of $h$. The four properties (b), (c), (e) and (g) together define a coherent risk measure in the sense of \cite{artzner1999coherent}, corresponding to an increasing and concave $h$ with $h(1)=1$ for $\rho_h$. 
For various characterizations and properties of distortion riskmetrics, see   \cite{wang2020characterization} on $L^\infty$ and \cite{wang2020distortion}  on more general spaces.
Convex order in (f) and its related notions are popular for modeling risk aversion in decision theory \citep{RothschildStiglitz1970}, and it is also widely studied in actuarial science and risk management \citep{ruschendorf2013mathematical, he2016risk}.

Many popular risk measures belong to the family of distortion risk measures, including the  regulatory risk measures used in the banking and insurance sectors, namely, the Value-at-Risk (VaR) and the Expected Shortfall (ES, also known as CVaR and TVaR), which are defined below. 
For a random variable $X$, the VaR at level $\alpha \in  [0, 1)$ is defined as 
\begin{align}\label{eq:varrr}
    \operatorname{VaR}_\alpha(X)=\inf \{x \in[-\infty, \infty]: \mathbb{P}(X \leqslant x) \geqslant 1-\alpha\} ,
\end{align}
and  the ES  at level $\beta \in  [0, 1)$ is defined as
\begin{align*}
    \operatorname{ES}_\beta(X)=\frac{1}{\beta} \int_0^\beta \operatorname{VaR}_\gamma(X) \mathrm{d} \gamma,
\end{align*}
where $\operatorname{VaR}_\gamma$ is defined in (\ref{eq:varrr}).
Here we use the convention of ``small $\alpha$" as in \cite{embrechts2018quantile}.
If $\alpha \in [0,1)$, $\operatorname{VaR}_{\alpha}$ and $\operatorname{ES}_{\alpha}$ 
are distortion risk measures, and they 
are associated with the distortion functions $h(t)=\id_{\{t>\alpha\}}$ and $ h(t)=\min \{t/\alpha, 1\}$, respectively.

In this paper, we write $X \sim F_{X}$ for $X \in \mathcal{X}$ having cumulative distribution $F_X$ and survival distribution $S_X$. Since the space $(\Omega, \mathcal{F}, \mathbb{P})$ is atomless, for each $X \in \mathcal{X}$ there exists a random variable $U_X$ with a uniform distribution on $[0,1]$ such that $F_X^{-1}(U_X)=X$, almost surely. The existence of such a $U_X$ for any random variable $X$ follows from \citet[Lemma A.32]{follmer2016stochastic}. For $x, y \in \mathbb{R}$, write $ x \vee y=\max \{x, y\}$ and $x \wedge y=\min \{x, y\}$.

\subsection{Risk sharing and inf-convolution}
\noindent 
Given $X \in \mathcal{X}$, the set of feasible allocations of $X$ is defined as
\begin{align}\label{def:allo}
    \mathbb{A}_n(X)=\left\{\left(X_1, \dots, X_n\right) \in \mathcal{X}^n: \sum_{i=1}^n X_i=X\right\}.
\end{align}

We consider a risk-sharing market, in which $n \in \mathbb{N}$ agents wish to share an aggregate risk $X \in \mathcal{X}$. All throughout, we let $[n]=\{1,\dots,n\}$, and we assume that agent $i \in[n]$ has a risk preference modeled by a risk measure $\rho_{i}$. The market redistributes the aggregate risk $X$ into an allocation $(X_1, \dots, X_n)\in \mathbb{A}_n(X)$, and we refer to $\sum_{i=1}^n \rho_{i}(X_i)$ as the associated aggregate risk value. Note that the definition of allocations depends on the specification of $\X$, which will vary across different applications in the later sections. 

Using \eqref{def:allo}, the inf-convolution $\square_{i=1}^n \rho_{i}$ of $n$  risk measures $\rho_{1}, \dots, \rho_{n}$ is defined as
\begin{align*}
    \dsquare_{i=1}^n \rho_{i}(X):=\inf \left\{\sum_{i=1}^n \rho_{i}\left(X_i\right):\left(X_1, \dots, X_n\right) \in \mathbb{A}_n(X)\right\}, \quad X \in \mathcal{X}.
\end{align*}
That is, the inf-convolution of $n$ risk measures is the infimum over aggregate risk values for all possible allocations.

An allocation $(X_1, \dots, X_n)$ is sum-optimal in $\mathbb{A}_n(X)$ if $\square_{i=1}^n \rho_{i}(X)=\sum_{i=1}^n \rho_{i} (X_i)$, i.e., it attains the optimal total risk value. An allocation $ (X_1, \dots, X_n ) \in \mathbb{A}_n(X)$ is Pareto optimal in $\mathbb{A}_n(X)$ if for any $(Y_1, \dots, Y_n ) \in \mathbb{A}_n(X)$ satisfying $\rho_{i}(Y_i) \leq \rho_{i}(X_i)$ for all $i \in[n]$, we have $\rho_{i}\left(Y_i\right)=\rho_{i}\left(X_i\right)$ for all $i \in[n]$.
Pareto optimality means that the allocation cannot be improved upon for all agents while providing a strict improvement for at least one agent. For distortion risk measures, the equivalence between Pareto optimality and sum-optimality is guaranteed when $\X=L^\infty$, as obtained in \citet[Proposition 1]{embrechts2018quantile}.
We will focus on sum-optimal allocations in this paper. However, although 
sum-optimal allocations are always Pareto optimal, the converse may not hold in case $\X=L^+$ or $X=L^-$, which are examined in Sections \ref{sec:risk-seeking}.

\section{Comonotonic and counter-monotonic risk sharing}
\label{sec:counter}
The elements in the allocation set $\mathbb{A}_n(X)$ can exhibit different dependence structures, with comonotonicity and counter-monotonicity being the two extreme cases. 
We first define comonotonicity and counter-monotonicity for bivariate random variables.
\begin{definition}
    Two random variables $X$ and $Y$ on $(\Omega, \mathcal{F}, \mathbb{P})$ are said to be comonotonic (resp.~counter-monotonic) if
    \begin{align*}
        (X(\omega)-X(\omega^{\prime}))(Y(\omega)-Y(\omega^{\prime})) \ge  0 ~(
        \mbox{resp.}~\le 0),\ \text{for $(\p\times \p)$-almost all}\  (\omega, \omega^{\prime}) \in \Omega^2.
    \end{align*}
\end{definition} 
Note that $X$ and $Y$ are counter-monotonic if and only if $X$ and $-Y$ are comonotonic. Also, comonotonicity of $(X,Y)$ is equivalent to the existence of increasing functions $f$ and $g$ and a random variable $Z$, such that $(X, Y)=(f(Z), g(Z))$ (almost surely), which follows from the stochastic representation of comonotonicity given in \citet[Proposition 4.5]{denneberg1994non}.
Since we treat almost surely identical random variables as equal, we will omit ``almost surely" in statements like the one above. Comonotonicity is foundational to modern ambiguity models in decision theory \citep{schmeidler1989subjective} and  it is widely studied in actuarial science and risk management \citep{dhaene2002concept, dhaene2006risk}.
Counter-monotonicity  also has special roles, quite different from comonotonicity, in decision theory \citep{principi2023antimonotonicity} and actuarial science \citep{cheung2014reducing, chaoubi2020sums}.

Next, we define these concepts in higher dimensions. A random vector $(X_{1},\ldots, X_{n})$ is (pairwise) comonotonic (resp.~counter-monotonic) if each pair of its components is comonotonic (resp.~counter-monotonic). Pairwise counter-monotonicity is the generalization of counter-monotonicity to the case $n \geq 3$, and we will hereafter use the simpler term counter-monotonicity throughout.
Although comonotonicity for $n\ge 3$ is a straightforward extension of the case $n=2$,  counter-monotonicity for $n\ge 3$ imposes strong constraints on the marginal distributions, and it is quite different from the case $n=2$ \citep{dall1972frechet, dhaene1999safest, cheung2014characterizing}. Below we provide some facts about counter-monotonicity. Let $\Pi_n$ be the set of all $n$-compositions of $\Omega$, that is,
$$
\Pi_n=\left\{\left(A_1, \ldots, A_n\right) \in \mathcal{F}^n: \bigcup_{i \in[n]} A_i=\Omega \text { and } A_1, \ldots, A_n \text { are disjoint}\right\}.
$$
Formally, compositions are partitions in which the order matters. 

  We quote below a stochastic representation of counter-monotonicity, which will be useful throughout our analysis. In what follows, $\esssup X$ and $\essinf X$  are the essential supremum and the essential infimum of $X$, respectively. Additionally, a random variable $X \in \mathcal{X}$ is said to be degenerate if there exists $c \in \mathbb{R}$ such that $X=c$ (almost surely).

\begin{proposition}[\cite{lauzier2023pairwise}]\label{prop:counter}
    For $X \in \mathcal{X}$ and $n\ge 3$, suppose that at least three components of $\left(X_1, \ldots, X_n\right) \in \mathbb{A}_n(X)$ are non-degenerate. Then $\left(X_1, \ldots, X_n\right)$ is counter-monotonic if and only if there exist constants $m_1, \ldots, m_n$ and  $\left(A_1, \ldots, A_n\right) \in \Pi_n$ such that
\begin{align}\label{eq:counter_form1}
\mbox{either~~~~} &    X_i=(X-m) \mathbb{1}_{A_i}+m_i  \text { for all } i \in[n], \text { with } m=\sum_{i=1}^n m_i \leq \essinf X, \\
\mbox{or~~~~} \label{eq:counter_form2}
    & X_i=(X-m) \mathbb{1}_{A_i}+m_i \text { for all } i \in[n], \text { with } m=\sum_{i=1}^n m_i \geq \esssup X.
\end{align}
\end{proposition}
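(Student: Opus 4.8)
The plan is to prove the two implications separately, with the reverse direction being routine and the forward direction carrying all the difficulty.

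\textbf{Sufficiency.} Suppose the displayed form holds with a composition $(A_1,\dots,A_n)\in\Pi_n$ and $m=\sum_i m_i$. Since the $A_i$ partition $\Omega$, we have $\sum_i\mathbb{1}_{A_i}=1$, so $\sum_i X_i=(X-m)+m=X$ and the allocation is feasible. Put $Y:=X-m$. In case \eqref{eq:counter_form1}, $m\le\essinf X$ gives $Y\ge 0$, so each $X_i-m_i=Y\mathbb{1}_{A_i}$ is a nonnegative variable, and $(X_i-m_i)(X_j-m_j)=0$ a.s. for $i\ne j$ because $A_i\cap A_j=\varnothing$. Any two nonnegative variables with a.s.\ disjoint supports are counter-monotonic: if $X_i(\omega)-m_i>X_i(\omega')-m_i$ then $X_i(\omega)-m_i>0$, forcing $X_j(\omega)-m_j=0$ and hence $X_j(\omega)-m_j\le X_j(\omega')-m_j$, so the relevant product is $\le 0$. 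Thus $(X_i,X_j)$ is counter-monotonic, and case \eqref{eq:counter_form2} is identical with $Y\le 0$.

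\textbf{Necessity: the engine.} The driving observation is a pointwise sign constraint. Writing $s_i(\omega,\omega')=\operatorname{sign}(X_i(\omega)-X_i(\omega'))$, pairwise counter-monotonicity says $s_is_j\le 0$ for all $i\ne j$ and $(\p\times\p)$-a.e.\ $(\omega,\omega')$. Three values in $\{-1,0,1\}$ with all pairwise products $\le 0$ cannot all be nonzero; hence for a.e.\ $(\omega,\omega')$ at most two components differ, and when two differ they move in opposite directions. This is precisely where $n\ge 3$ and non-degeneracy of three components enter: a jointly fluctuating pair such as $(X_1,X_2)=(U,1-U)$ with $U$ uniform is admissible, but then any variable counter-monotonic to both must be simultaneously non-increasing and non-decreasing in $U$, hence a.s.\ constant wherever $U$ varies.

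\textbf{Necessity: from the sign constraint to disjoint supports (main obstacle).} Fix three non-degenerate components $X_1,X_2,X_3$ and a uniform $U$ with $X_1=F_{X_1}^{-1}(U)$. Off the atoms of $X_1$, counter-monotonicity of $(X_1,X_j)$ makes each $X_j$ a.s.\ a non-increasing function of $X_1$, so $X_2$ and $X_3$ are comonotonic there; being also counter-monotonic, all their pairwise products are both $\ge 0$ and $\le 0$, hence $0$, which forces at least one of $X_2,X_3$ to be a.s.\ constant on the region where $X_1$ genuinely varies. Iterating this over all triples of non-degenerate indices, and handling each atom of each $X_i$ by the same argument applied conditionally on that atom, I would upgrade the pairwise sign constraint to \emph{mutual exclusivity}: there exist constants $m_i$ and pairwise essentially disjoint sets $A_i$ with $X_i=m_i$ a.s.\ on $\Omega\setminus A_i$. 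Passing rigorously from ``one of $X_2,X_3$ is constant on the fluctuation region of $X_1$'' to genuinely disjoint supports for all $n$ components, while controlling the null sets and the atoms, is the delicate measure-theoretic heart of the argument, and is exactly the content of the cited representation.

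\textbf{Necessity: sign dichotomy and assembly.} Once the $A_i$ are disjoint, on $A_i$ every other component sits at its constant value $m_j$, so $X=X_i+\sum_{j\ne i}m_j$, i.e.\ $X-m=X_i-m_i$ on $A_i$ with $m=\sum_j m_j$; off $\bigcup_i A_i$ all components are constant and $X=m$. Comparing $\omega\in A_i$ with $\omega'\in A_j$ ($i\ne j$), counter-monotonicity of $(X_i,X_j)$ yields $(X_i(\omega)-m_i)(X_j(\omega')-m_j)\ge 0$, and bridging two points of $A_i$ through a point of $A_j$ shows the deviations $X_i-m_i$ share one common sign $\sigma$ across all non-empty $A_i$ (here having at least two non-empty fluctuation sets, ensured by three non-degenerate components, is essential). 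If $\sigma=+$ then $m_i=\essinf X_i$, $X-m\ge 0$, and $m\le\essinf X$, giving \eqref{eq:counter_form1}; if $\sigma=-$ we obtain \eqref{eq:counter_form2}. Finally, since $X-m=0$ on $\Omega\setminus\bigcup_i A_i$, I may absorb this leftover set into any block to make $(A_1,\dots,A_n)$ a genuine composition in $\Pi_n$ (empty blocks being allowed for degenerate components) without altering any $X_i$, which produces the stated form.
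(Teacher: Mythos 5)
Your sufficiency argument is correct and complete, and your final ``sign dichotomy and assembly'' step is also sound: \emph{given} that each $X_i$ equals a constant $m_i$ off a set $A_i$ and that the $A_i$ are pairwise disjoint, comparing deviations across blocks does force a common sign and hence one of the forms \eqref{eq:counter_form1} or \eqref{eq:counter_form2}. Note, however, that the paper itself offers no proof to compare against: Proposition \ref{prop:counter} is imported verbatim from \cite{lauzier2023pairwise}. Your attempt therefore has to stand on its own, and it does not.

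The genuine gap is in the necessity direction, and you name it yourself: the passage from the pairwise, pointwise sign constraint to \emph{mutual exclusivity} (pairwise disjoint fluctuation sets $A_i$ with $X_i=m_i$ a.s.\ off $A_i$) is the entire substance of the proposition, and you defer it to ``the content of the cited representation,'' which is circular --- that is precisely what is to be proved. The sketch you give for this step is also not robust as stated. First, your triple argument operates ``off the atoms of $X_1$,'' which is a null set whenever $X_1$ is non-degenerate but purely atomic (e.g.\ two-valued); this case is unavoidable, since jackpot allocations of a constant total --- the very allocations the proposition is designed to capture --- have all components two-valued. The promised fix ``apply the same argument conditionally on that atom'' is not a repetition of the same argument: on an atom of $X_1$, counter-monotonicity with $X_1$ imposes no constraint whatsoever, so one must instead run a genuine induction on the remaining components, which you do not set up. Second, even granting the triple conclusion, iterating ``over all triples'' must output a single family of pairwise disjoint sets valid simultaneously for all $n$ components, together with the identification $m_i=\essinf X_i$ (or $\esssup X_i$); converting the a.e.-pair statements into a.s.-set statements while controlling null sets is exactly the measure-theoretic work that is missing. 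In short: sufficiency and the assembly are fine, but the heart of necessity is asserted rather than proven.
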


By taking $m_{1}=\ldots=m_{n}=0$, which is possible when $X \geq 0$ or $X \leq 0$, a simple counter-monotonic allocation in the form of \eqref{eq:counter_form1} and \eqref{eq:counter_form2} is given by
\begin{align*}
    X_i=X \mathbb{1}_{A_i}, \, \text{for each} \, i\in [n] , \, \text{where}\, \left(A_1, \ldots, A_n\right) \in \Pi_n.
\end{align*}
Specifically, such an allocation is called a \textit{jackpot allocation} if $X \geq 0$ and a \textit{scapegoat allocation} if $X \leq 0$ by \cite{lauzier2024negatively}.
It is clear that there is a ``winner-takes-all" structure in a jackpot allocation and a ``loser-loses-all" structure in a scapegoat allocation.

For a given $X$, we denote by $ \mathbb{A}_n^{+}(X)$ the  set of comonotonic allocations and by $\mathbb{A}_n^{-}(X)$  the set of counter-monotonic allocations, introduced by \cite{ghossoub2024counter}.
These two sets of allocations impose restrictions on the dependence structures of allocations, and both  are strict subsets of the set   $\mathbb{A}_{n}(X)$ of
all possible allocations.

It is well known that for any $(X_1,\dots,X_n)\in \mathbb A_n(X)$, there exists $(Y_1,\dots,Y_n)\in \mathbb A^+_n(X)$ 
such that $Y_i\le_{\rm cx} X_i$ for each $i\in[n]$. 
This is known as 
the comonotonic improvement theorem (e.g., \cite{landsberger1994co}, \cite{ruschendorf2013mathematical}, or \cite{denuitetal2023comonotonicity}).
Recently, \cite{lauzier2024negatively} provided a counter-monotonic improvement  theorem, which states that under mild conditions, for any allocation $(X_1,\dots,X_n)\in \mathbb A_n(X)$, there exists $(Y_1,\dots,Y_n)\in \mathbb A^-_n(X)$ 
such that $Y_i\ge_{\rm cx} X_i$, for each $i\in[n]$. The formal statement is summarized below, and it will be useful for the results of this paper.

\medskip
 
\begin{theorem}[\cite{lauzier2024negatively}]\label{theorem:counter_impro}
     Let $X_1, \ldots, X_n \in L^1$ be nonnegative and $X=\sum_{i=1}^n X_i$. Assume that there exists a uniform random variable $U$ independent of $X$. Then, there exists $\left(Y_1, \ldots, Y_n\right) \in \mathbb{A}_n(X)$ such that (i) $\left(Y_1, \ldots, Y_n\right)$ is counter-monotonic; (ii) $Y_i \geq_{\mathrm{cx}} X_i$, for each $i \in[n]$; and (iii) $Y_1, \ldots, Y_n$ are nonnegative. Moreover, $\left(Y_1, \ldots, Y_n\right)$ can be chosen as a jackpot allocation.
\end{theorem}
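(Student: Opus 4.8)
The plan is to produce the required allocation \emph{explicitly} as a jackpot allocation, using the independent uniform $U$ to manufacture, conditionally on the aggregate $X$, exactly the extra randomization needed to spread each $X_i$ to its convex-order maximum. Concretely, I would define the conditional budget shares
$$p_i(x) = \frac{\E[X_i \mid X = x]}{x}, \qquad x > 0,\ i \in [n],$$
with an arbitrary convention on $\{X=0\}$, and note that $\sum_{i=1}^n p_i(x) = \E[X \mid X = x]/x = 1$ since $\sum_{i} X_i = X$, so $(p_1(x),\dots,p_n(x))$ is a probability vector for each $x>0$. Writing $q_i(x) = \sum_{j \le i} p_j(x)$ with $q_0 \equiv 0$, I would then set
$$A_i = \{\, q_{i-1}(X) \le U < q_i(X) \,\}, \qquad i \in [n],$$
putting all of $\{X=0\}$ into $A_1$, and define $Y_i = X\,\mathbb{1}_{A_i}$. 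Because $U$ is uniform and independent of $X$, conditioning on $\{X=x\}$ leaves $U$ uniform on $[0,1]$, so $(A_1,\dots,A_n) \in \Pi_n$ and $\p(A_i \mid X = x) = q_i(x)-q_{i-1}(x) = p_i(x)$. In particular $(Y_1,\dots,Y_n)$ is a jackpot allocation; this settles (i) and (iii) together with the ``Moreover'' clause, once one records that the jackpot structure is counter-monotonic as noted after Proposition \ref{prop:counter}.

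The crux is statement (ii), $Y_i \ge_{\mathrm{cx}} X_i$, which I would obtain by arguing conditionally on $X$ and then integrating. Conditionally on $\{X=x\}$ with $x>0$, the variable $X_i$ takes values in $[0,x]$ with conditional mean $\E[X_i \mid X = x] = x\, p_i(x)$, while $Y_i = x\,\mathbb{1}_{A_i}$ is the two-point variable equal to $x$ with probability $p_i(x)$ and to $0$ otherwise, having the same conditional mean. The key elementary fact is that among all $[0,x]$-valued random variables with a prescribed mean, the two-point law on $\{0,x\}$ is largest in the convex order: for convex $u$, convexity gives the chord bound $u(w) \le \frac{x-w}{x}u(0) + \frac{w}{x}u(x)$ for $w \in [0,x]$, and taking conditional expectations yields $\E[u(X_i) \mid X = x] \le \E[u(Y_i) \mid X = x]$. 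Thus $X_i \le_{\mathrm{cx}} Y_i$ holds conditionally on $X$ (the case $x=0$ being trivial, since there $X_i = Y_i = 0$), and since this conditional inequality holds for every convex $u$, integrating over the law of $X$ upgrades it to $\E[u(X_i)] \le \E[u(Y_i)]$ for all convex $u$, i.e.\ $Y_i \ge_{\mathrm{cx}} X_i$.

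The genuinely delicate points are measurability and the conditional-to-unconditional passage, rather than the convex-order extremality, which is the one-line chord estimate above. I would need regular conditional distributions of $X_i$ given $X$ to make $x \mapsto p_i(x)$ a bona fide measurable function and the sets $A_i$ measurable; these exist because the variables are real-valued on an atomless space. The role of the hypothesis that $U$ is uniform and independent of $X$ is exactly to supply, on top of $X$, the fresh randomness realizing the two-point conditional law with the correct probabilities $p_i(X)$ --- without such independent randomization one cannot in general push $X_i$ to the jackpot extreme while keeping the componentwise sum equal to $X$. This is the step I expect to be the main obstacle to write carefully.

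The remaining checks are routine: that $(A_1,\dots,A_n)$ is a genuine composition so that $\sum_i Y_i = X$ and $(Y_1,\dots,Y_n) \in \mathbb{A}_n(X)$; that $Y_i \le X \in L^1$, so that all the convex-order statements are between integrable variables and the equal-means condition $\E[Y_i] = \E[X\,p_i(X)] = \E[X_i]$ is consistent; and that the jackpot structure $Y_i Y_j = 0$ for $i \ne j$ yields pairwise counter-monotonicity directly from the definition.
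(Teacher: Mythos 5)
Your proof is correct, and since the paper only quotes this theorem from \cite{lauzier2024negatively} without reproving it, the relevant comparison is to that reference, whose argument is essentially the one you reconstructed: conditionally on $X$, replace $X_i$ by the extremal two-point law on $\{0,X\}$ with the same conditional mean $\E[X_i \mid X]$, realized as $X\mathbb{1}_{A_i}$ via the independent uniform, and conclude by the chord bound plus the tower property. The technical caveats you flag are routine to fill (Doob--Dynkin factorization of $\E[X_i\mid X]$ makes $p_i$ measurable without needing regular conditional distributions, and the uncovered null set $\{U=1\}$ can be absorbed into $A_n$ since the paper identifies almost surely equal random variables), so there is no gap.
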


The counter-monotonic improvement theorem indicates that jackpot allocations are always preferred by risk-seeking agents. To apply the counter-monotonic improvement theorem,  the technical assumption that there exists a (nondegenerate) uniform random variable $U$  independent of $X$ is used to generate a random lottery, which has utility for risk-seeking agents. To formalize this, we introduce the following set:
    $$\mathcal{X}^{\perp}=\{X \in \mathcal{X}: \text{there exists a uniform random variable $U$ independent of} \,X\}.$$

Due to the comonotonic improvement theorem and the counter-monotonic improvement theorem, Pareto-optimal risk allocations can be constrained in the sets of comonotonic or counter-monotonic allocations, for agents with suitable risk attitudes. Hereafter, we consider risk-sharing problems constrained to these specific allocation structures. The comonotonic inf-convolution $\boxplus_{i=1}^n \rho_{i}$ of risk measures $\rho_{1}, \dots, \rho_{n}$ is defined as
$$
\underset{i=1}{\stackrel{n}{\boxplus}} \rho_{i}(X)=\inf \left\{\sum_{i=1}^n \rho_{i}\left(X_i\right):\left(X_1, \ldots, X_n\right) \in \mathbb{A}_n^{+}(X)\right\}.
$$
Similarly, the counter-monotonic inf-convolution $ \dboxminus_{i=1}^{n} \rho_{i}$ is defined as 
\begin{align*}
    \dboxminus_{i=1}^{n} \rho_{i}(X) = \inf \left\{\sum_{i=1}^{n}\rho_{i}(X_{i}): (X_{1}, \ldots, X_{n}) \in \mathbb{A}_{n}^{-}(X) \right\}.
\end{align*} 
An allocation $\left(X_1, \ldots, X_n\right)$ in $\mathbb{A}_n^{+}(X)$
(resp.~$\mathbb{A}_n^{-}(X)$) is called an optimal allocation of $X$ for $\left(\rho_{1}, \ldots, \rho_{n}\right)$ within $ \mathbb{A}_n^{+}$  (resp.~$ \mathbb{A}_n^{-}$)
if $$\sum_{i=1}^n \rho_{i}\left(X_i\right)=\dboxplus_{i=1}^n \rho_{i}(X) ~~~~~ \left(\mbox{resp.~}\sum_{i=1}^n \rho_{i}\left(X_i\right)=\dboxminus_{i=1}^n \rho_{i}(X)\right).
$$
By definition,   $\square_{i=1}^n \rho_{i}(X) \leqslant \boxplus_{i=1}^{n}\rho_{i}(X)$ and $\square_{i=1}^n \rho_{i}(X) \leqslant \dboxminus_{i=1}^{n}\rho_{i}(X)$.
Hence, if an optimal allocation of $X$ is comonotonic, then it is also an optimal allocation within $ \mathbb{A}_n^{+}$ and $\square_{i=1}^n \rho_{i}(X)=\boxplus_{i=1}^n \rho_{i}(X)$. A similar implication holds if  an optimal allocation of $X$ is counter-monotonic.
 Unconstrained,  comonotonic, and counter-monotonic risk sharing problems
 correspond to those problems over 
$\mathbb{A}_{n}(X)$, 
$\mathbb{A}_{n}^{+}(X)$, and $\mathbb{A}_{n}^{-}(X)$, respectively.

In the rest of the paper, we will assume that each agent is associated with a distortion risk measure.

\section{General relations}\label{sec:concave}

Before delving into the optimal risk-sharing mechanisms for risk-averse and risk-seeking agents, we first provide an overview of the relationships between three types of inf-convolutions: $\dsquare_{i=1}^{n} \rho_{h_i}$, $\dboxminus_{i=1}^{n}\rho_{h_i}$ and $\dboxplus_{i=1}^{n}\rho_{h_i}$.
 In the homogeneous case, these relationships are straightforward to characterize, and they are outlined in \citet[Theorem 1]{ghossoub2024counter}. However, the situation becomes more complex in the heterogeneous case where agents have different risk preferences. Among these three inf-convolutions,  the unconstrained one is always the smallest, since it works on the largest allocation set, but the relationship between the comonotonic and counter-monotonic inf-convolutions is not always the same. Below, we present several special cases in which the relationship between these three variations can be explicitly determined.

In the following examples, we explore the cases of Value-at-Risk (VaR) and Expected Shortfall (ES) using the results of \cite{embrechts2018quantile}.

\begin{example}[Inf-convolution of VaRs]\label{exp:1}
    By \citet[Corollary 2]{embrechts2018quantile}, for $\alpha_{1}, \ldots, \alpha_{n} \geq 0$ and an integrable random variable $X$, if $\sum_{i=1}^{n}\alpha_{i} < 1$ then
    $$\dsquare_{i=1}^n \operatorname{VaR}_{\alpha_i}(X)=\operatorname{VaR}_{\sum_{i=1}^n \alpha_i}(X),$$ 
    and by \citet[Theorem 2]{embrechts2018quantile}, there exists a counter-monotonic optimal allocation of $X$. This gives $\dboxminus_{i=1}^n \operatorname{VaR}_{\alpha_i} \leq \dsquare_{i=1}^n \operatorname{VaR}_{\alpha_i}$, thereby implying that $\dboxminus_{i=1}^n \operatorname{VaR}_{\alpha_i}\! = \dsquare_{i=1}^n \operatorname{VaR}_{\alpha_i}$, since $\dboxminus_{i=1}^n \operatorname{VaR}_{\alpha_i}\! \geq \dsquare_{i=1}^n \operatorname{VaR}_{\alpha_i}$ always holds. Since, in addition, $\dboxplus_{i=1}^n \operatorname{VaR}_{\alpha_i}\! = \operatorname{VaR}_{\bigvee_{i=1}^{n} \alpha_i}$, we obtain 
    $$
    \dboxminus_{i=1}^n \operatorname{VaR}_{\alpha_i} = \dsquare_{i=1}^n \operatorname{VaR}_{\alpha_i} = \operatorname{VaR}_{\sum_{i=1}^n \alpha_i} \leq \dboxplus_{i=1}^n \operatorname{VaR}_{\alpha_i} = \operatorname{VaR}_{\bigvee_{i=1}^{n} \alpha_i}.
    $$
    Note, in particular, that the inequality above is generally not an equality.
\end{example}

\begin{example}[Inf-convolution of ESs]
    For any $\beta_{1}, \ldots, \beta_{n} \in [0,1)$, it holds that 
    $$
    \dboxminus_{i=1}^n \operatorname{ES}_{\beta_i}=\dsquare_{i=1}^n \operatorname{ES}_{\beta_i}= \dboxplus_{i=1}^n \operatorname{ES}_{\beta_i} = \operatorname{ES}_{\bigvee_{i=1}^{n} \beta_i}.
    $$
    Using the subadditivity of ES, we can easily verify that, if $j$ is the agent with $\beta_j=\bigvee_{i=1}^n \beta_i$,
    then the allocation given by $X_j=X$ and $X_i=0$ for $i\ne j$ is optimal (this is also implied by \citet[Theorem 2]{embrechts2018quantile}). Since this allocation is both comonotonic and counter-monotonic, the three inf-convolutions have the same value here. This is a special case of Corollary \ref{co:1} below.
\end{example}

Within the case of VaR, it follows that $\dboxminus_{i=1}^{n}\rho_{h_i}\leq\dboxplus_{i=1}^{n}\rho_{h_i}$ from Example \ref{exp:1}. However, this is not the only possible relationship.
The following theorem provides conditions under which $\dboxminus_{i=1}^{n}\rho_{h_i}\geq\dboxplus_{i=1}^{n}\rho_{h_i}$, and it identifies sufficient conditions for which the inequality becomes equality. First, recall from \cite{ghossoub2024counter} that a function $h \in \mathcal{H}$ is \emph{dually subadditive} if $h$ is subadditive (i.e., $h(x+y)\leq h(x) + h(y)$ for $x, y\in [0,1]$ with $x+y \leq 1$) and  its dual $\tilde{h}$ is superadditive (i.e., $\tilde{h}(x+y)\geq \tilde{h}(x) + \tilde{h}(y)$ for $x, y\in [0,1]$ with $x+y \leq 1$).

\medskip

\begin{theorem}\label{thm:1}
    Let $h_i\in \mathcal{H}$ for $i\in[n]$ and $h=\bigwedge_{i=1}^n h_i$.
    \begin{enumerate}[(i)]
        \item If $h$ is dually subadditive, then
    \begin{align}\label{ineq:general} \dboxminus_{i=1}^{n}\rho_{h_i}\ge \dboxplus_{i=1}^{n}\rho_{h_i}=\rho_h\geq  \dsquare_{i=1}^{n}\rho_{h_i}.
    \end{align}   
    \item If $h$ is dually subadditive and $h_j=h$ for some $j\in[n]$, then \begin{align}\label{eq:dually}
           \dboxminus_{i=1}^{n}\rho_{h_i}= \dboxplus_{i=1}^{n}\rho_{h_i}=\rho_h\ge   \dsquare_{i=1}^{n} \rho_{h_i}.
        \end{align}
        \item The function
         $h $ is concave if and only if\begin{align}\label{ineq:1} \dboxminus_{i=1}^{n}\rho_{h_i}\geq  \dboxplus_{i=1}^{n}\rho_{h_i}=\rho_h=\dsquare_{i=1}^{n}\rho_{h_i}.
    \end{align}
     
         \item  If  \eqref{eq:dually} holds then $h$ is dually subadditive.
    \end{enumerate}
\end{theorem}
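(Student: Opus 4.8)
The plan is to prove the contrapositive: assuming $h=\bigwedge_{i=1}^n h_i$ is \emph{not} dually subadditive, I will produce a single random variable on which $\dboxminus_{i=1}^n\rho_{h_i}$ falls strictly below $\rho_h$. Since \eqref{eq:dually} forces, via its chain of equalities, that $\dboxminus_{i=1}^n\rho_{h_i}=\rho_h$ as functionals on $\X$, any such $X$ with $\dboxminus_{i=1}^n\rho_{h_i}(X)<\rho_h(X)$ contradicts \eqref{eq:dually}. The failure of dual subadditivity splits into two cases --- $h$ is not subadditive, or $\tilde h=\bigvee_{i=1}^n\tilde h_i$ is not superadditive --- which I handle by mirror-image constructions on a nonnegative and a nonpositive indicator, respectively.

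For the first case, suppose $h(s+t)>h(s)+h(t)$ for some $s,t>0$ with $s+t\le 1$. I would take $X=\mathbb{1}_B$ with $\mathbb{P}(B)=s+t$, so that $\rho_h(X)=h(s+t)$, and split $B=B_1\sqcup B_2$ with $\mathbb{P}(B_1)=s$ and $\mathbb{P}(B_2)=t$. Choosing indices $a,b$ attaining $h(s)=h_a(s)$ and $h(t)=h_b(t)$, the jackpot allocation assigning $\mathbb{1}_{B_1}$ to agent $a$, $\mathbb{1}_{B_2}$ to agent $b$, and $0$ to all others is of the form \eqref{eq:counter_form1} (with $m_i=0$), hence lies in $\mathbb{A}_n^{-}(X)$ by Proposition~\ref{prop:counter}. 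Using $\rho_{h_i}(\mathbb{1}_C)=h_i(\mathbb{P}(C))$, its aggregate value is $h_a(s)+h_b(t)$; when $a\ne b$ this equals $h(s)+h(t)<h(s+t)=\rho_h(X)$, so $\dboxminus_{i=1}^n\rho_{h_i}(X)<\rho_h(X)$, as desired.

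For the second case, suppose $\tilde h(q_1+q_2)<\tilde h(q_1)+\tilde h(q_2)$ for some $q_1,q_2>0$ with $q_1+q_2\le 1$. I would pass to the gain side: take $X=-\mathbb{1}_B\le 0$ with $\mathbb{P}(B)=q_1+q_2$, so $\rho_h(X)=-\tilde h(q_1+q_2)$ by the identity $\rho_h(-\mathbb{1}_C)=-\tilde h(\mathbb{P}(C))$. Splitting $B=B_1\sqcup B_2$ with $\mathbb{P}(B_1)=q_1$, $\mathbb{P}(B_2)=q_2$ and using the scapegoat allocation of the form \eqref{eq:counter_form2}, the value becomes $-\tilde h_a(q_1)-\tilde h_b(q_2)$, where $a,b$ attain the maxima $\tilde h(q_1)=\tilde h_a(q_1)$ and $\tilde h(q_2)=\tilde h_b(q_2)$. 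When $a\ne b$ this equals $-\tilde h(q_1)-\tilde h(q_2)<-\tilde h(q_1+q_2)=\rho_h(X)$, again contradicting \eqref{eq:dually}.

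The hard part, in both cases, is exactly the distinctness of the two agents $a$ and $b$. Counter-monotonicity forces the two pieces onto \emph{different} agents (the jackpot/scapegoat components have disjoint supports), so if the minimum $\bigwedge_i h_i$ (resp.\ the maximum $\bigvee_i\tilde h_i$) happens to be attained at both arguments by one and the same dominant agent, then one piece must be handed to an agent whose distortion strictly exceeds $h$ there, and the clean two-piece bound $h(s)+h(t)$ need no longer dip below $\rho_h(X)$. I expect to resolve this degenerate ``single-dominant-agent'' configuration either by selecting the witness $(s,t)$ of non-subadditivity at a genuine crossing where two distortions tie, or by refining to a many-piece jackpot that distributes the mass of $B$ across several agents each realizing the pointwise minimum; pinning down this step, and isolating the precise hypotheses on the $h_i$ it requires, is the crux of the argument.
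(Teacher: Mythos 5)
Your proposal addresses only part (iv) of the theorem; parts (i)--(iii) are never argued. Those parts are not routine corollaries: in the paper they carry most of the proof and rest on ingredients entirely absent from your write-up --- the identity $\dboxplus_{i=1}^{n}\rho_{h_i}=\rho_{h}$ for $h=\bigwedge_{i=1}^n h_i$ (\citet[Proposition 5]{embrechts2018quantile}), the monotonicity of $h\mapsto\rho_h$, the homogeneous counter-monotonic equality $\dboxminus_{i=1}^{n}\rho_{h}=\rho_h$ for dually subadditive $h$ (\citet[Theorem 3]{ghossoub2024counter}), the observation that $(X,0,\dots,0)$ is a counter-monotonic allocation (which gives (ii)), the comonotonic improvement theorem, and the equivalence between subadditivity of $\rho_h$ and concavity of $h$ (\citet[Theorem 3]{wang2020characterization}) used in (iii). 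So even if your argument for (iv) were airtight, you would have proved one quarter of the statement.

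For (iv) itself there is a genuine unresolved gap, and it is exactly the one you flag: your two-piece jackpot/scapegoat construction (which is sound when the minimizing agents $a$ and $b$ are distinct) collapses when $h=\bigwedge_i h_i$ is attained at both witness points by one and the same agent, and neither of your proposed repairs is carried out. It is doubtful they can be in this generality: take $h_1(x)=x^2$ and $h_2(x)=\id_{\{x>0\}}$, so that $h=h_1$ is nowhere subadditive yet is attained uniquely by agent 1 at every interior point; there is no ``crossing'' of distortions for your first fix, and refining to more pieces only hands mass to agent 2 at cost $1$ per piece, so every jackpot allocation of an indicator $\id_B$ costs at least $h(\mathbb{P}(B))=\rho_h(\id_B)$ (and the mirror scapegoat construction fails the same way, since $\bigvee_i\tilde h_i=\tilde h_1$ here). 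This is precisely where your reading of (iv) diverges from the paper's: the paper never attacks the heterogeneous family directly, but substitutes the homogeneous family $h_i=h$ for all $i$ into \eqref{eq:dually} and invokes the homogeneous characterization of \citet[Theorem 2]{ghossoub2024counter}; in that setting your construction has no $a=b$ obstruction, because every agent carries the same distortion $h$. In other words, you are attempting a strictly stronger, literally heterogeneous version of (iv) than the one the paper's argument establishes, and the step you yourself call ``the crux'' is where it breaks; as submitted, (iv) is not proved either.
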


\begin{proof}
(i) It is trivial to see that $\dsquare_{i=1}^{n}\rho_{h_i} \leq \dboxminus_{i=1}^{n}\rho_{h_i}$ and $\dsquare_{i=1}^{n}\rho_{h_i} \leq \dboxplus_{i=1}^{n}\rho_{h_i}$.
The first equality in \eqref{ineq:general} directly follows from \citet[Proposition 5]{embrechts2018quantile}. 
For any distortion functions $h$ and $g$, we have $\rho_h \leq \rho_g$ if $h \leq g$ (see \citet[Lemma 1]{wang2020characterization}). Therefore, for any $h_i\in \mathcal{H}$ such that $h$ is dually subadditive, we have  
    $$
    \dboxminus_{i=1}^{n} \rho_{h_i} \geq \dboxminus_{i=1}^{n} \rho_{h} = \rho_h, 
    $$
which directly follows from the dual subadditivity of $h$ and \citet[Theorem 3]{ghossoub2024counter}. 

\medskip

(ii)  Take $(X_1, \dots, X_n)\in \mathbb{A}_n(X)$ such that $X_j=X$ and $X_i =0$ for $i\in [n]\setminus\{j\}$. 
It is straightforward to verify that this allocation is counter-monotonic.
Then it follows that $$\dsquare_{i=1}^{n} \rho_{h_i}\leq \dboxminus_{i=1}^{n} \rho_{h_i} \leq \rho_h =\dboxplus_{i=1}^{n}\rho_{h_i}.$$ By the equality \eqref{ineq:general}, we can obtain the desired result.

\medskip

(iii) We first show the ``if " part. Letting $h_i=h$ for $i\in [n]$,  we have $\dsquare_{i=1}^{n}\rho_{h}=\rho_h$, which implies that $h$ is concave since $\rho_h$ is subadditive (see \citet[Theorem 3]{wang2020characterization}). 
Next, we show the ``only if " part. 
By \citet[Lemma 1]{wang2020characterization}, we have  
    $$
    \dsquare_{i=1}^{n} \rho_{h_i} \geq \dsquare_{i=1}^{n} \rho_{h} = \rho_h=\dboxplus_{i=1}^{n} \rho_{h_i}. 
    $$
The first equality follows from the comonotonic improvement theorem of \cite{landsberger1994co}; see also \citet[Theorem 1]{ghossoub2024counter}. Moreover, it always holds that $\dsquare_{i=1}^{n} \rho_{h_i} \leq \dboxplus_{i=1}^{n} \rho_{h_i}$. Thus, $\dsquare_{i=1}^{n} \rho_{h_i} = \dboxplus_{i=1}^{n} \rho_{h_i}$. 

\medskip

(iv) The inequality in \eqref{eq:dually} implies that $\dboxminus_{i=1}^{n}\rho_{h}=\rho_h$ by taking $h_i=h$ for $i\in [n]$. Thus, we obtain that $h$ is dually subadditive by \citet[Theorem 2]{ghossoub2024counter}.
\end{proof}

We immediately obtain the following corollary:
 
\begin{corollary}\label{co:1}
    Let $h_i\in \mathcal{H}$ for $i\in[n]$ and $h=\bigwedge_{i=1}^n h_i$. If $h_i$ is concave for $i\in[n]$ and $h_j=h$ for some $j$, then
    \begin{align}
    \label{eq:all-eq}\dboxminus_{i=1}^{n}\rho_{h_i}= \dboxplus_{i=1}^{n}\rho_{h_i}=\rho_h = \dsquare_{i=1}^{n} \rho_{h_i}.
    \end{align}
\end{corollary}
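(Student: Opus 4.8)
The plan is to obtain the corollary directly from parts (ii) and (iii) of Theorem \ref{thm:1}, once I verify the two structural facts those parts require: that $h$ is concave, and that it is dually subadditive.

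First I would check that $h=\bigwedge_{i=1}^n h_i$ is concave, using the elementary fact that a pointwise minimum of concave functions is concave. For $\lambda\in[0,1]$ and any fixed $i$, concavity of $h_i$ gives $h_i(\lambda s+(1-\lambda)t)\ge \lambda h_i(s)+(1-\lambda)h_i(t)\ge \lambda h(s)+(1-\lambda)h(t)$, and taking the minimum over $i$ on the left yields concavity of $h$. Since $h$ also lies in $\mathcal H$, so that $h(0)=0$, I would next note that concavity already forces $h$ to be dually subadditive: a concave function with $h(0)=0$ is subadditive, and its dual $\tilde h(t)=h(1)-h(1-t)$ is convex with $\tilde h(0)=0$, hence superadditive. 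These two short observations are the only nontrivial inputs.

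With them in hand the conclusion is immediate. Because $h$ is concave, part (iii) of Theorem \ref{thm:1} gives $\dboxplus_{i=1}^{n}\rho_{h_i}=\rho_h=\dsquare_{i=1}^{n}\rho_{h_i}$. Because $h$ is additionally dually subadditive and $h_j=h$ for some $j$, part (ii) gives $\dboxminus_{i=1}^{n}\rho_{h_i}=\dboxplus_{i=1}^{n}\rho_{h_i}=\rho_h$. Chaining the two strings of equalities yields $\dboxminus_{i=1}^{n}\rho_{h_i}=\dboxplus_{i=1}^{n}\rho_{h_i}=\rho_h=\dsquare_{i=1}^{n}\rho_{h_i}$, which is exactly \eqref{eq:all-eq}.

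I do not anticipate any genuine obstacle, since the corollary is just a specialization of the theorem; the only point that deserves a moment's care is the passage from concavity to dual subadditivity, i.e.\ confirming that concavity of $h$ together with $h(0)=0$ is enough to invoke part (ii) and not merely part (iii).
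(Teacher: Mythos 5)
Your proof is correct and takes essentially the same route as the paper: the paper's own (terse) argument is precisely that concavity of each $h_i$ yields concavity of $h=\bigwedge_{i=1}^n h_i$, which in turn yields dual subadditivity of $h$, so that parts (ii) and (iii) of Theorem \ref{thm:1} combine to give \eqref{eq:all-eq}. Your only addition is to spell out the two elementary facts (pointwise minima of concave functions are concave; a concave $h\in\mathcal H$ is subadditive and has a convex, hence superadditive, dual), which the paper leaves implicit.
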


The proof is straightforward since the concavity of $h_i$ for $i\in [n]$ guarantees the concavity of $h$ , thus implying the dual subadditivity of $h$. In this case, the total risk will be absorbed by the least risk-averse agent if such an agent exists within the group. 

\medskip

\begin{example}
If $h$ is concave,   \eqref{eq:all-eq} holds when $X$ is a constant $x\in \R$, without assuming the condition $h_j=h$ for some $j$. 
    Indeed, any constant vector $(x_1,\dots,x_n)$ in $\mathbb A_n(x)$ gives $\sum_{i=1}^n  \rho_{h_i}(x_i)=\rho_h(x)=x$, and the rest follows from Theorem \ref{thm:1} (iii).
\end{example}

\medskip

\begin{example}
    Suppose that all agents have the same risk preferences, i.e., $h_1=\dots=h_n=h$. If $h$ is dually subadditive, then it holds that
    \begin{align*}
           \dboxminus_{i=1}^{n}\rho_{h}= \dboxplus_{i=1}^{n}\rho_{h}=\rho_h.
        \end{align*}
    The converse statement also holds true by using (iv) of Theorem \ref{thm:1}; this also follows from \citet[Theorem 3]{ghossoub2024counter}. 
\end{example}

Based on the results above for risk-averse agents, 
if there is an agent $j$ with the lowest level of risk aversion (that is, $h_j$ is the smallest among $h_1,\dots,h_n$),
then the optimal way to allocate the total risk $X$
is to assign it entirely to this most risk-tolerant agent. The risk preferences of the other agents are irrelevant in this case. However, when heterogeneous risk-seeking agents are involved, the situation is dramatically different; all agents will participate in the gambling and contribute to the aggregate risk. The details will be discussed in Section \ref{sec:risk-seeking}.

\section{Risk-seeking agents}\label{sec:risk-seeking}

This section contains the main technical contributions of this paper, which characterize optimal allocations for the unconstrained and counter-monotonic risk sharing problem in the setting of risk-seeking agents.

\subsection{Main results}
We first note that, when the agents are risk-seeking, we need to constrain the set of allocations to be bounded from below or above, as discussed by \cite{lauzier2024negatively}. 
The next result shows that  the inf-convolution of $\rho_{h_1},\dots,\rho_{h_n}$ for risk-seeking agents is typically negative infinity if 
the set of payoffs $\mathcal{X}$ is taken to be $L^\infty$.

\begin{proposition}
\label{prop:infinite}
Let $\X=L^\infty$.
  If each $h_{i}\in \H, ~i\in [n]$ is convex and is not the identity function, then 
$$
\dsquare_{i=1}^n \rho_{h_i} (X)=\dboxminus_{i=1}^n \rho_{h_i} (X)=-\infty \mbox{~~~~for all  $X\in \X$.}
$$\end{proposition}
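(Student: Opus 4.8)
The plan is to prove the stronger of the two identities, namely $\dboxminus_{i=1}^n \rho_{h_i}(X) = -\infty$, and then invoke the inequality $\dsquare_{i=1}^n \rho_{h_i} \le \dboxminus_{i=1}^n \rho_{h_i}$ recorded after the definition of the counter-monotonic inf-convolution to conclude $\dsquare_{i=1}^n \rho_{h_i}(X) = -\infty$ as well. So it suffices to exhibit, for an arbitrary fixed $X \in L^\infty$, a family of counter-monotonic allocations whose aggregate risk value diverges to $-\infty$.

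First I would record an elementary fact about convex distortion functions: if $h \in \H$ is convex and $h \neq \mathrm{id}$, then $h(p) < p$ for every $p \in (0,1)$. Indeed, $g := h - \mathrm{id}$ is convex with $g \le 0$ on $[0,1]$ (its graph lies below the chord through $(0,0)$ and $(1,1)$) and $g(0)=g(1)=0$; were $g$ to vanish at an interior point it would attain its maximum $0$ there, forcing $g \equiv 0$ and hence $h = \mathrm{id}$, a contradiction. Consequently, choosing $\p(A_i) = 1/n$ (for $n \ge 2$) gives the key strict inequality $S := \sum_{i=1}^n h_i(1/n) < \sum_{i=1}^n (1/n) = 1$.

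For the construction, I would fix a partition $(A_1, \ldots, A_n) \in \Pi_n$ with $\p(A_i) = 1/n$ (available since the space is atomless), write $b = \esssup X$, and for a scalar $m \le \essinf X$ set
\[
X_i = (X - m)\mathbb{1}_{A_i} + m_i, \qquad m_1 = m, \ m_2 = \cdots = m_n = 0,
\]
so that $\sum_i m_i = m$. This is exactly form \eqref{eq:counter_form1}, hence $(X_1, \ldots, X_n)$ is a counter-monotonic allocation of $X$ (one may also check pairwise counter-monotonicity directly, since on each $A_i$ only the $i$-th coordinate is non-constant while the others are frozen). Using translation invariance ($h_i(1)=1$), monotonicity, and positive homogeneity of $\rho_{h_i}$, together with the pointwise bound $0 \le (X - m)\mathbb{1}_{A_i} \le (b - m)\mathbb{1}_{A_i}$ and $\rho_{h_i}(\mathbb{1}_{A_i}) = h_i(1/n)$, I would estimate
\[
\sum_{i=1}^n \rho_{h_i}(X_i) \;\le\; (b - m)\sum_{i=1}^n h_i(1/n) + m \;=\; bS + m(1 - S).
\]
Since $1 - S > 0$, letting $m \to -\infty$ sends the right-hand side to $-\infty$, giving $\dboxminus_{i=1}^n \rho_{h_i}(X) = -\infty$ and therefore $\dsquare_{i=1}^n \rho_{h_i}(X) = -\infty$.

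The only genuinely delicate point is the sign bookkeeping that makes this bound useful. The allocation must be admissible as form \eqref{eq:counter_form1}, which requires $m \le \essinf X$ and is precisely why $m$ is driven to $-\infty$ rather than $+\infty$; at the same time the strict inequality $S < 1$ — the single place where convexity and the non-identity hypothesis enter — must flip the divergent term $m(1-S)$ to $-\infty$. Balancing the two competing monotone effects is the crux: the losses $(X-m)\mathbb{1}_{A_i}$ grow in magnitude like $|m|$, but each is discounted by the factor $h_i(1/n) < 1/n$, and it is this discount that lets the constant shift $m$ dominate. Everything else is a routine application of the translation invariance, monotonicity, and positive homogeneity of distortion risk measures collected in Section \ref{sec:pre}.
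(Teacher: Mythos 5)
Your proof is correct, and it runs on the same engine as the paper's: an $n$-cell partition with $\p(A_i)=1/n$, a constant stake driven to infinity, and the strict inequality $\sum_{i=1}^n h_i(1/n)<1$, which is exactly where convexity and the non-identity hypothesis enter (the paper records the same fact as $h_i(1/n)<1/n$ if and only if $h_i\neq\mathrm{id}$, which you prove rather than assert). The organization differs in two ways worth noting. First, the paper homogenizes: it replaces the $h_i$ by $h=\bigvee_{i=1}^n h_i$ (still convex, still not the identity), uses $\dboxminus_{i=1}^n\rho_{h_i}\le\dboxminus_{i=1}^n\rho_h$, and computes with a single distortion; you keep the heterogeneous $h_i$ throughout, at no extra cost since your bound $\sum_i\rho_{h_i}(X_i)\le bS+m(1-S)$ is just as clean. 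Second, and more substantively, the paper treats $X=0$ first, via the symmetric allocation $X_i=nm\id_{A_i}-m$ with $m\to+\infty$, and then passes to general $X$ through the chain $\dboxminus_{i=1}^n\rho_h(X)\le\dboxminus_{i=1}^n\rho_h(\esssup X)=\dboxminus_{i=1}^n\rho_h(0)+\esssup X$, citing ``translation invariance and monotonicity of $\rho_h$''. That monotonicity step is not a pointwise application of monotonicity of $\rho_h$: absorbing $X-\esssup X$ into one component of a counter-monotonic allocation of the constant $\esssup X$ can destroy counter-monotonicity, so making that inequality rigorous essentially requires an allocation of the very form you wrote down, $X_i=(X-m)\id_{A_i}+m_i$ as in \eqref{eq:counter_form1}, which keeps all the randomness of $X$ inside the indicators. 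Your route---building that allocation directly for arbitrary $X\in L^\infty$ and letting $m\to-\infty$---is therefore self-contained and sidesteps the one step of the paper's argument that is glossed over; what the paper's reduction buys in exchange is a shorter computation and a direct link to the homogeneous case. (Both proofs, yours and the paper's, implicitly need $n\ge2$, which you flag explicitly; for $n=1$ the statement is vacuous-to-false since the only allocation is $X$ itself.)
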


\begin{proof}
Let $h=\bigvee_{i=1}^n h_i$. 
It is clear that 
$\dboxminus_{i=1}^n \rho_{h_i} (X)\le \dboxminus_{i=1}^n \rho_{h} (X)$,
for all $X$. We will show  
$\dboxminus_{i=1}^n \rho_{h} (0)=-\infty$.
For a convex distortion function $h_i$, we have $h_i(1/n)<1/n$ if and only if $h_i$ is not the identity function. 
Hence, $h(1/n)=\bigvee _{i=1}^nh_i (1/n)<1/n$, and $h$ is not the identity function. 
 Let $\theta= n h(1/n)<1$.
Take $(A_1,\dots,A_n)\in \Pi_n$ with $\p(A_i)=1/n$ for $i\in [n]$
and $m>0$. 
Define $X_i=nm\id_{A_i}-m$ for $i\in [n]$. Clearly, $(X_1,\dots,X_n)\in \mathbb A^-_n(0)$.  
It follows that $\rho_h(X_i)= nm h(1/n)-m = (\theta -1)m$.
Therefore, 
$$
\dboxminus_{i=1}^n \rho_{h} (0)\le \sum_{i=1}^n \rho_h(X_i) =n(\theta-1)m.
$$
Letting $m\to\infty$ shows that 
$\dboxminus_{i=1}^n \rho_{h} (0)=-\infty$. 
For any $X\in \X$,  using translation invariance and monotonicity of $\rho_h$, and the fact that $X\le \esssup X$, we obtain
$$
\dsquare_{i=1}^n \rho_{h_i} (X)\le \dboxminus_{i=1}^n \rho_{h_i} (X)\le
\dboxminus_{i=1}^n \rho_{h} (X)
\le \dboxminus_{i=1}^n \rho_{h} (\esssup X )
= \dboxminus_{i=1}^n \rho_{h} (0)+\esssup X  =
-\infty,
$$
which completes the proof.  
\end{proof}

Due to Proposition \ref{prop:infinite}, we will take $\X$ to be $L^+$ or $L^-$ in the remainder of the section. These choices correspond to the natural constraint of no short-selling in a financial market. For instance, if $\X=L^+$, then the total loss is nonnegative, and every agent cannot receive a negative loss (that is, a gain) from the allocation. If $\X=L^-$, then the total risk is a gain, and every agent cannot take a loss by sharing the gain. 

Before we proceed, we introduce some additional terminology and notation that will be useful in our further analysis.
Given $n$ distortion functions $h_i \in \mathcal{H}$ for $i=1,\ldots, n$,  their inf-convolution $\dsquare_{i=1}^{n} h_i(x) : [0,1]\mapsto \mathbb{R}$ is defined as
\begin{align*}
    \dsquare_{i=1}^{n} h_i(x)=\inf \left\{ \sum_{i=1}^{n}h_i(x_i): x_i\in [0,1] \ \text{for}\ i\in[n]; ~  \sum_{i=1}^{n}x_i =x\right\}.
\end{align*}
Similarly, the sup-convolution can be defined as 
\begin{align*}
    \lozen_{i=1}^{n} h_i(x)=\sup \left\{ \sum_{i=1}^{n}h_i(x_i):   x_i\in [0,1] \ \text{for}\ i\in[n]; ~  \sum_{i=1}^{n}x_i =x\right\}.
\end{align*}
Here we also introduce the sup-convolution since it will be useful in our analysis and will simplify our result. 
Note that the inf-convolution of real functions $h_1,\dots,h_n$ is similar to the inf-convolution of risk measures, but the domain is restricted to real numbers in $[0,1]$.

\begin{theorem}\label{thm:convex} 
     Suppose that  $h_{i}\in \mathcal{H} $ is continuous and convex  for $i\in[n]$, and that $X\in \mathcal{X}^{\perp}$ with $\X=L^+$ or $\X=L^-$. Then
     \begin{align}\label{eq:convex1}
        \dsquare_{i=1}^n \rho_{h_{i}}(X) = \dboxminus_{i=1}^n \rho_{h_{i}}(X) = \rho_{g}(X), \end{align}
        where  $g$ is such that (i) $g=\dsquare_{i=1}^n h_{i}$ if $\mathcal X=L^+$; (ii)
       $\tilde g=\lozen_{i=1}^n \tilde{h}_i$ if $\mathcal X=L^-$. 
\end{theorem}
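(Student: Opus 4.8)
The plan is to prove the case $\X=L^+$ in full and then deduce the case $\X=L^-$ by duality. Write $g=\dsquare_{i=1}^n h_i$ and recall that $\rho_h(X)=\int_0^\infty h(\p(X\ge x))\,\d x$ for $X\ge 0$.

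\textbf{Reduction to jackpot allocations.} First I would record that each $\rho_{h_i}$ is \emph{antitone} in the convex order: since $h_i$ is convex, its dual $\tilde h_i$ is concave, so $\rho_{\tilde h_i}$ is convex-order consistent, and the identity $\rho_{h_i}(Z)=-\rho_{\tilde h_i}(-Z)$ together with $Y\ge_{\mathrm{cx}}Z\iff -Y\ge_{\mathrm{cx}}-Z$ gives $\rho_{h_i}(Y)\le\rho_{h_i}(Z)$ whenever $Y\ge_{\mathrm{cx}}Z$. Applying the counter-monotonic improvement theorem (Theorem \ref{theorem:counter_impro}, available since $X\in\X^\perp$ and the parts are nonnegative) to an arbitrary $(X_1,\dots,X_n)\in\mathbb{A}_n(X)$ yields a jackpot allocation $(X_1',\dots,X_n')$ with $X_i'\ge_{\mathrm{cx}}X_i$, hence $\sum_i\rho_{h_i}(X_i')\le\sum_i\rho_{h_i}(X_i)$. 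As jackpot allocations are counter-monotonic and feasible, this shows that $\dsquare_{i=1}^n\rho_{h_i}(X)$, $\dboxminus_{i=1}^n\rho_{h_i}(X)$, and the infimum $J(X)$ taken only over jackpot allocations $X_i=X\id_{A_i}$, $(A_1,\dots,A_n)\in\Pi_n$, all coincide. It then suffices to prove $J(X)=\rho_g(X)$.

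\textbf{Lower bound.} The clean point is that for a jackpot allocation the survival layers align: writing $s_i(x)=\p(X\ge x,\,A_i)$, disjointness gives $\sum_i s_i(x)=\p(X\ge x)=S_X(x)$ for every $x>0$, with each $s_i(x)\in[0,1]$. By the definition of the inf-convolution of functions, $\sum_i h_i(s_i(x))\ge g(S_X(x))$ pointwise, and integrating over $x\in(0,\infty)$ gives $\sum_i\rho_{h_i}(X\id_{A_i})\ge\rho_g(X)$; taking the infimum over partitions yields $J(X)\ge\rho_g(X)$.

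\textbf{Upper bound (the main obstacle).} I must exhibit a jackpot allocation attaining $\rho_g(X)$, i.e.\ realizing the optimal split of $S_X(x)$ \emph{at every layer simultaneously}. Using continuity and convexity of the $h_i$, the infimum defining $g(s)$ is attained and the minimizers can be chosen as measurable, nondecreasing maps $s\mapsto\phi_i(s)$ with $\sum_i\phi_i(s)=s$ and $\phi_i(0)=0$ (monotonicity follows by equalizing subgradients: as $s$ increases the common marginal increases, so each $\phi_i$ increases). Since $\sum_i\phi_i(s)=s$ is absolutely continuous and every summand is nondecreasing, each $\phi_i$ is absolutely continuous and $(\phi_1',\dots,\phi_n')$ is a probability vector for a.e.\ $s$. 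I would then use $X\in\X^\perp$ and atomlessness to produce two independent uniforms independent of $X$: one to form the distributional transform $\hat S$ (uniform, with $\{X\ge x\}=\{\hat S\le S_X(x)\}$ a.s.), the other, $V$, to randomize the assignment via $A_i=\{V\in(\Phi_{i-1}(\hat S),\Phi_i(\hat S)]\}$ with $\Phi_i=\sum_{j\le i}\phi_j'$. Independence then gives $\p(X\ge x,\,A_i)=\int_0^{S_X(x)}\phi_i'(s)\,\d s=\phi_i(S_X(x))$, so $\sum_i\rho_{h_i}(X\id_{A_i})=\int_0^\infty\sum_i h_i(\phi_i(S_X(x)))\,\d x=\rho_g(X)$, hence $J(X)\le\rho_g(X)$. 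The measurable selection of monotone optimizers and the bookkeeping by which the randomized partition realizes the prescribed layer probabilities is the delicate part; the rest is soft, and finiteness is clear because $X$ is bounded and the $h_i$ are bounded by $1$.

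\textbf{The case $\X=L^-$.} For $X\le 0$ I would pass to $Y=-X\ge 0$ and $Y_i=-X_i\ge 0$ through $\rho_{h_i}(X_i)=-\rho_{\tilde h_i}(Y_i)$. Negation preserves counter-monotonicity, so minimizing $\sum_i\rho_{h_i}(X_i)$ over allocations of $X$ becomes maximizing $\sum_i\rho_{\tilde h_i}(Y_i)$ over allocations of $Y$. The same improvement-theorem reduction applies (now the concave $\rho_{\tilde h_i}$ are convex-order \emph{isotone}, so jackpots are again optimal, here for the supremum), and the dual layer inequality $\sum_i\tilde h_i(s_i(x))\le\lozen_{i=1}^n\tilde h_i(S_Y(x))$ together with the mirror construction gives the sup-convolution value $\rho_{\tilde g}(Y)$ with $\tilde g=\lozen_{i=1}^n\tilde h_i$. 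Translating back via $-\rho_{\tilde g}(-X)=\rho_g(X)$ completes the proof.
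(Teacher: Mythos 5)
Your proposal follows the same architecture as the paper's proof: reduction to jackpot allocations via Theorem \ref{theorem:counter_impro} (using that convex $h_i$ makes $\rho_{h_i}$ antitone with respect to $\ge_{\mathrm{cx}}$), the layer-wise lower bound $\sum_{i=1}^n h_i(\p(X\ge x,\, A_i)) \ge g(S_X(x))$ (this is Lemma \ref{lem:new}, done inline), attainment of $\rho_g(X)$ by a randomized partition whose conditional layer probabilities are derivatives of monotone minimizer functions (the construction in Lemmas \ref{le:1} and \ref{le:2}), and duality for $L^-$. The one genuinely different element is how you handle non-smooth $h_i$: the paper establishes the monotone minimizers $f_i$ only for differentiable (in fact strictly) convex $h_i$, and then treats general continuous convex $h_i$ by Bernstein-polynomial approximation; you instead assert that measurable nondecreasing minimizer selections $\phi_i$ exist for arbitrary continuous convex $h_i$. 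Your observation that monotonicity together with $\sum_i \phi_i = \mathrm{id}$ forces each $\phi_i$ to be $1$-Lipschitz, hence absolutely continuous, so that $\phi_i'$ exists a.e.\ and integrates back to $\phi_i$, is exactly what would let the construction bypass the approximation step entirely.

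That selection claim is where the gap is. ``Equalizing subgradients'' pins down the minimizer, and yields its monotonicity in $s$, only when the $h_i$ are strictly convex; the theorem permits $h_i$ with affine pieces, for which the argmin is a nontrivial convex set and an arbitrary (even an arbitrary measurable) selection need not be monotone. Take $h_1=h_2=\mathrm{id}$, where every split is optimal, or the piecewise-linear functions of Example \ref{ex:convex_2}, where for small $s$ any split with every $x_i$ below its kink is optimal. So, as written, your upper bound is established only in the same smooth case covered by Lemma \ref{le:1}, and the passage to general $h_i$ --- the step the paper's Bernstein argument exists to handle --- is missing. The gap is repairable within your framework: perturb to $h_i^\epsilon(x)=h_i(x)+\epsilon x^2$, which is strictly convex, so the perturbed minimizers $\phi_i^\epsilon(s)$ are unique, nondecreasing in $s$, and $n\epsilon$-optimal for the original problem; then let $\epsilon\downarrow 0$ and use Helly's selection theorem to extract pointwise limits that are nondecreasing, sum to $s$, and are optimal by continuity of the $h_i$. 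Two smaller points to patch: you need $V$ independent of $(X,\hat S)$, whereas $\mathcal{X}^{\perp}$ supplies a single uniform independent of $X$ --- split it into two independent uniforms (e.g., by interleaving binary digits) and build $\hat S$ from one and $V$ from the other; and $\{X\ge x\}=\{\hat S\le S_X(x)\}$ can fail at atoms of $X$, but only for countably many $x$, which is immaterial under the $\d x$ integral.
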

The proof of Theorem \ref{thm:convex} involves three lemmas and is presented in Section \ref{sec:pf-th3}.     
For the case where all agents are risk averse, the relationship among the three types of inf-convolutions can be established using Theorem \ref{thm:1}(iii), implying that agents always prefer comonotonic allocations. In contrast, when all agents are risk seeking, Theorem \ref{thm:convex} indicates that the relationship is determined by a preference for counter-monotonic allocations. Specifically, for $\mathcal{X}=L^+$ (resp.~$\mathcal{X}=L^-$) and $X\in \mathcal{X}^{\perp}$, we have 
\begin{align}\label{ineq:2}
    \dboxplus_{i=1}^n \rho_{h_{i}}(X)\geq \dsquare_{i=1}^n \rho_{h_{i}}(X) = \dboxminus_{i=1}^n \rho_{h_{i}}(X).
\end{align}
Notably, the comonotonic inf-convolution is always a distortion risk measure, whereas the inf-convolution of convex distortion risk measures is no longer a distortion risk measure, but rather a monotone distortion riskmetric, as stated in Theorem \ref{thm:convex}. To further verify \eqref{ineq:2} for risk-seeking agents, we provide several numerical experiments, reported in Table \ref{tab:1}. In these examples, we consider two agents with $h_1(x)=1-(1-x)^{0.3}$ and $h_2(x)=1-(1-x)^{0.6}$, respectively. Clearly, both agents are risk seeking.

\begin{table}[t]
\renewcommand{\arraystretch}{1.5}
\centering
\begin{tabular}{c|c|c|c|cc} 
 & $X$ & $\X$ &$ \dboxplus_{i=1}^2 \rho_{h_i}(X)$ &  $\dboxminus_{i=1}^2 \rho_{h_i}(X)= \dsquare_{i=1}^2 \rho_{h_i}(X)$\\ 
\hline
\multirow{2}{*}{$Y \sim \text{Uniform}(0,1)$ } &$Y$ & $L^+$  & 0.3692 & 0.2074  \\
& $-Y$ & $L^-$ & -0.7667 & -1.0435  \\
\hline
\multirow{2}{*}{$Y \sim \text{Pareto}(3,2)$} &$Y$ & $L^+$   & 2.7291 & 1.3828 \\
& $-Y$  & $L^-$ & -9.4262 & -11.0881 \\
\hline
\multirow{2}{*}{$Y \sim \text{logN}(0,1)$} &$Y$ & $L^+$  & 1.0825 & 0.5849 \\
& $-Y$  & $L^-$ & -5.5828 & -6.4773 \\
\hline
\end{tabular}
\label{table:3}
\caption{Comparison of the three inf-convolutions.}\label{tab:1}
\end{table}

\begin{example}
    When all agents have the same risk preference, that is, $h_1=\dots=h_n=h$, and when $h$ is convex and continuous on $[0,1]$, Theorem \ref{thm:convex} reduces to the homogeneous case, examined in \citet[Theorem 3]{ghossoub2024counter}. For example, when $\mathcal{X}=L^+$, the function $g$ becomes 
$$
g(x)=\dsquare_{i=1}^n h(x) = n h(x/n),  ~~ x\in [0,1], 
$$
which follows from the convexity of $h$. Similarly, when $\mathcal{X}=L^-$, we have 
$$
g(x)= \tilde{l}(x),\ \text{and}\ l(x)=\lozen_{i=1}^n \tilde{h}(x) =n \tilde{h}(x/n), ~~ x\in [0,1],
$$
and it is straightforward to verify that $g(x)=nh(1-(1-x)/n)-nh(1-1/n)$.
\end{example}

\subsection{Three technical lemmas and the proof of Theorem \ref{thm:convex}}
\label{sec:pf-th3}
We first present a technical lemma.

\begin{lemma}\label{lem:new} 
For a given $x>0$, any random variable $X $, and $h\in \mathcal H$,
we have 
$$
 \inf \left\{\sum_{i=1}^{n} h(\mathbb{P}(X\mathbb{1}_{A_{i}} \geq x)): (A_{1}, \ldots, A_{n})\in \Pi_{n}  \right\}
 = \dsquare_{i=1}^n h (\p(X\ge x)).
$$   
\end{lemma}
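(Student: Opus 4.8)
The plan is to reduce the combinatorial infimum over compositions to the purely analytic inf-convolution $\dsquare_{i=1}^n h$ by identifying exactly which probability vectors can arise. The key observation is that, since $x>0$, for any event $A_i$ one has $\{X\mathbb{1}_{A_i}\ge x\}=\{X\ge x\}\cap A_i$: outside $A_i$ the product $X\mathbb{1}_{A_i}$ vanishes and is therefore strictly below $x$, while on $A_i$ it coincides with $X$. Writing $B=\{X\ge x\}$ and $p=\p(B)$, and setting $p_i=\p(B\cap A_i)$ for a given $(A_1,\dots,A_n)\in\Pi_n$, this identity rewrites the objective $\sum_{i=1}^n h(\p(X\mathbb{1}_{A_i}\ge x))$ as $\sum_{i=1}^n h(p_i)$, so the whole problem becomes one of optimizing $\sum_i h(p_i)$ over the achievable vectors $(p_1,\dots,p_n)$.

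First I would prove the inequality ``$\ge$''. Since $(A_1,\dots,A_n)$ partitions $\Omega$, the sets $B\cap A_1,\dots,B\cap A_n$ partition $B$, so the numbers $p_i$ satisfy $p_i\ge 0$ and $\sum_{i=1}^n p_i=\p(B)=p$; in particular $p_i\in[0,p]\subseteq[0,1]$, which matches the domain constraint in the definition of $\dsquare_{i=1}^n h$. Thus every composition yields a feasible point $(p_1,\dots,p_n)$ for $\dsquare_{i=1}^n h(p)$, whence $\sum_{i=1}^n h(p_i)\ge \dsquare_{i=1}^n h(p)$, and taking the infimum over compositions gives the left-hand side $\ge \dsquare_{i=1}^n h(\p(X\ge x))$.

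For the reverse inequality ``$\le$'' I would show that every feasible vector is realizable by a composition. Given any $(p_1,\dots,p_n)$ with $p_i\ge 0$ and $\sum_i p_i=p$, atomlessness of $(\Omega,\mathcal F,\p)$ lets me split $B$ into disjoint measurable pieces $C_1,\dots,C_n$ with $\p(C_i)=p_i$ and $\bigcup_i C_i=B$ (the restriction of $\p$ to $B$ is again atomless when $p>0$; the case $p=0$ is trivial since each $p_i$ is forced to $0$ and both sides equal $nh(0)=0$). Extending to a composition by absorbing the complement of $B$ into one block---say $A_1=C_1\cup B^c$ and $A_i=C_i$ for $i\ge 2$---gives $(A_1,\dots,A_n)\in\Pi_n$ with $\p(B\cap A_i)=p_i$, so the corresponding objective equals $\sum_i h(p_i)$. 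Minimizing over all feasible vectors then yields the left-hand side $\le \dsquare_{i=1}^n h(p)$, and combined with the previous paragraph this establishes equality.

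The only genuine obstacle is the realizability step, which is exactly where atomlessness is essential: on a space with atoms not every point of the simplex $\{(p_i):p_i\ge 0,\ \sum_i p_i=p\}$ need be attained, and the claimed equality could fail. This measure-splitting is standard---it follows from a Lyapunov/Sierpi\'nski-type argument, analogous to the existence result \citet[Lemma A.32]{follmer2016stochastic} invoked earlier for uniform random variables---so I would cite it rather than reprove it.
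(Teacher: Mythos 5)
Your proof is correct and follows essentially the same route as the paper's: the ``$\ge$'' direction from the fact that the probabilities $\p(B\cap A_i)$ are feasible for $\dsquare_{i=1}^n h(p)$, and the ``$\le$'' direction from atomlessness guaranteeing that every feasible vector $(p_1,\dots,p_n)$ is realized by some composition. If anything, your write-up is tidier than the paper's (which attaches the two arguments to the opposite inequalities), and you make explicit the identity $\{X\mathbb{1}_{A_i}\ge x\}=\{X\ge x\}\cap A_i$, where the hypothesis $x>0$ is used, as well as the trivial case $p=0$.
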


\begin{proof}
Let $p=\p(X\ge x)$. If $p=0$, then there is nothing to show. Suppose $p>0$ in what follows. 
    Since the probability space $(\Omega,\mathcal F,\p)$  is atomless, 
    for any $(p_1,\dots,p_n)\in \R_+^n$ with $p_1+\dots+p_n=p$, there exists a composition $(A_1,\dots,A_n) $ of 
    $\Omega$ such that 
    $\p(A_i\cap \{X\ge x\})=p_i$ for $i\in [n]$.
    Therefore, 
$$
 \inf \left\{\sum_{i=1}^{n} h(\mathbb{P}(X\mathbb{1}_{A_{i}} \geq x)): (A_{1}, \ldots, A_{n})\in \Pi_{n}  \right\}
 \ge  \dsquare_{i=1}^n h (\p(X\ge x)).
$$ 
The converse direction is straightforward since 
$\sum_{i=1}^n \mathbb{P}(X\mathbb{1}_{A_{i}} \geq x)=\p(X\ge x) = p.$
\end{proof}

\begin{lemma}\label{le:1}
    If for each $i\in [n]$, the function $h_{i}\in \H$ is strictly convex and differentiable, then there exist increasing functions $f_{i}: [0,1]\mapsto [0,1]$, $i=1,\dots,n$, such that $\dsquare_{i=1}^{n} h_{i}(x)=\sum_{i=1}^{n} h_{i}(f_{i}(x))$, where $\sum_{i=1}^{n}f_{i}(x)=x$ and $x\in [0,1]$.
\end{lemma}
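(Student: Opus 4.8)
The plan is to treat the definition of $\dsquare_{i=1}^n h_i(x)$ as a finite-dimensional convex program and to extract a \emph{monotone} family of minimizers from the associated marginal value (Lagrange multiplier). First I would record the elementary structure. For each fixed $x\in[0,1]$ the feasible set $\{(x_1,\dots,x_n)\in[0,1]^n:\sum_i x_i=x\}$ is compact and nonempty and each $h_i$ is continuous, so the infimum is attained and we may write $\dsquare_{i=1}^n h_i(x)=\sum_i h_i(f_i(x))$ for some minimizer. Since every $h_i$ is convex, the function $g:=\dsquare_{i=1}^n h_i$ is convex on $[0,1]$ and hence has nondecreasing one-sided derivatives. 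Note also that when $x<1$ the constraints $x_i\le 1$ cannot bind (the remaining coordinates are nonnegative), so only the nonnegativity constraints are active. Because the $h_i$ are not the identity, convexity forces $h_i(t)<t$ on $(0,1)$, whence $h_i'(0)<1<h_i'(1)$; this nondegeneracy is exactly what makes the price range used below nonempty.

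Next I would characterize minimizers by marginal equalization. Using a transfer perturbation (move mass $\varepsilon$ from coordinate $j$ to coordinate $i$, which is first-order feasible whenever $x_j>0$), optimality yields the first-order conditions: there is a common value $\lambda(x)\ge 0$ with $h_i'(f_i(x))=\lambda(x)$ whenever $f_i(x)>0$, and $h_i'(0)\ge \lambda(x)$ whenever $f_i(x)=0$. Equivalently, each $f_i(x)$ minimizes $t\mapsto h_i(t)-\lambda(x)\,t$ over $[0,1]$. By the standard subdifferential formula for inf-convolutions (here $\partial h_i$ is single-valued because $h_i$ is differentiable), this common multiplier is a subgradient of $g$, so one may take $\lambda(x)=g_+'(x)$, the right derivative of the convex function $g$; in particular $\lambda$ is nondecreasing in $x$.

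I would then build the increasing selections. Since $h_i'$ is continuous and nondecreasing (convex plus differentiable), its generalized inverse $(h_i')^{-1}(\lambda):=\sup\{t\in[0,1]:h_i'(t)\le\lambda\}$ is nondecreasing, and setting $f_i(x)=(h_i')^{-1}(\lambda(x))$ composes two nondecreasing maps, so each $f_i$ is nondecreasing, the clamping to $[0,1]$ preserving monotonicity. What remains is the feasibility $\sum_i f_i(x)=x$, and I expect this to be the main obstacle: where some $h_i'$ is locally constant (equal to $\lambda(x)$ on a nontrivial interval) the individual minimizer is a whole interval, so the naive inverse need not sum to $x$ and the slack must be distributed among the tied coordinates. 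The clean remedy is to work with the correspondence $\lambda\mapsto D_i(\lambda)=[a_i(\lambda),b_i(\lambda)]$, the (interval) minimizer set of $h_i(\cdot)-\lambda\,\cdot$: its endpoints are nondecreasing, $a_i$ is left-continuous and $b_i$ right-continuous, the total-demand intervals $[\sum_i a_i(\lambda),\sum_i b_i(\lambda)]$ leave no gaps, and their values cover $[0,1]$ as $\lambda$ increases from $0$ (all demands $0$) to large values (all demands $1$). For $x\in[0,1]$ I would set $\lambda(x)=\inf\{\lambda:\sum_i b_i(\lambda)\ge x\}$, so that $x\in[\sum_i a_i(\lambda(x)),\sum_i b_i(\lambda(x))]$, and then pick a water-filling selection $f_i(x)\in[a_i(\lambda(x)),b_i(\lambda(x))]$ with $\sum_i f_i(x)=x$.

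The delicate point, which I would prove carefully, is that this selection can be made simultaneously feasible and nondecreasing in $x$ across the whole interval: because $\lambda(\cdot)$ is nondecreasing and the endpoints $a_i,b_i$ are nondecreasing, the only freedom is within the tied (flat) regions, and on each such region one distributes the required total monotonically. Everything else reduces to routine convex-analysis bookkeeping, so I would keep the write-up concentrated on this monotone-selection step.
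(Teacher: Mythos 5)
Your proposal is correct, and at its core it is the same argument as the paper's: equalize marginal distortions at a common multiplier and recover the minimizers as (generalized) inverses of the derivatives $h_i'$ evaluated at a nondecreasing price $\lambda(x)$, so that each $f_i$ is a composition of nondecreasing maps. The difference is in execution and generality rather than in route, and it is worth recording. The paper's proof asserts that each $h_i$ is \emph{strictly} convex --- which does not follow from ``convex, differentiable, not the identity'' (a distortion with a linear piece is a counterexample) --- treats $(h_i')^{-1}$ as a genuine function, and writes $f_i=(h_i')^{-1}\circ\bigl(\sum_j (h_j')^{-1}\bigr)^{-1}$ assuming an interior solution; corner solutions and ties are never addressed, although corners genuinely occur (e.g.\ $h_1(t)=t^2$, $h_2(t)=(t+t^2)/2$, $x=0.1$ gives the optimum $x_1^*=x$, $x_2^*=0$ since $h_2'(0)=\tfrac12>h_1'(x)$). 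Your KKT conditions with the inequality $h_i'(0)\ge\lambda(x)$ at inactive coordinates, the identification $\lambda(x)=g_+'(x)$ as a nondecreasing subgradient of the (constrained) inf-convolution, and the demand-correspondence/water-filling selection supply exactly the machinery needed for these degenerate cases. The step you flag as delicate does close: within a flat region where $\lambda(\cdot)$ is constant, the proportional split $f_i(x)=a_i(\lambda)+\theta(x)\bigl(b_i(\lambda)-a_i(\lambda)\bigr)$ with a common factor $\theta(x)$ is feasible and nondecreasing, and across regions monotonicity follows from $b_i(\lambda)\le a_i(\lambda')$ whenever $\lambda<\lambda'$ (since $h_i'$ is nondecreasing). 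In short, same decomposition and same key formula, but your version proves the lemma as actually stated, whereas the paper's proof covers only the strictly convex, interior case.
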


\begin{proof}
For a fixed $x\in [0,1]$, our aim is to find the infimum of $g: (x_1, \dots, x_n)\mapsto \sum_{i=1}^{n} h_i(x_i)$ over $(x_1, \dots, x_n)\in [0,1]^n$, subject to the constraint $x_1 + \dots + x_n = x$.
It is straightforward to see that $g$ is strictly convex as  $h_{i}$ is strictly convex for all $i\in [n]$. Thus, the minimum of $g(x_1,\dots,x_n)$
is attained at, say, $(x_1^{*}, \dots, x_n^{*})$. 
Write  $$ g^* (x_1,\dots,x_{n-1})= 
g\left(x_1,\dots,x_{n-1}, x-\sum_{i=1}^{n-1} x_i\right)$$
for $(x_1,\dots,x_{n-1})\in [0,1]^n$ with  $\sum_{i=1}^{n-1} x_i\le x.$  The first-order condition implies  \begin{align*} 
\frac{\partial g^*}{\partial x_i}
(x^*_1,\dots,x^*_{n-1})= h_{i}^{\prime}(x_{i}^{*}) - h_{n}^{\prime}\left(x-\sum_{j=1}^{n-1}x_{j}^{*}\right)=0, ~~~ i\in [n-1], \end{align*}
which implies
\begin{equation}
\label{eq:R1-1}
h_{i}^{\prime}(x_{i}^{*})= h_{n}^{\prime} (x_n^* ),~~~i\in [n-1].
\end{equation} 
Write 
$t_i=(h_i')^{-1}$ for $i\in [n]$.  Using \eqref{eq:R1-1}, we get 
$$
\sum_{i=1}^{n}t_i (h_{n}^{\prime}(x_{n}^{*}))= 
\sum_{i=1}^{n}t_i (h_{i}^{\prime}(x_{i}^{*})) =\sum_{i=1}^n x_i^*= x. 
$$
Therefore, 
$$
h_{i}^{\prime}(x_{i}^{*}) = h_{n}^{\prime}(x_{n}^{*})  = \left(\sum_{i=1}^{n}t_i\right)^{-1}(x) ~~ \text{\ \ \ for all}\  i\in [n].
$$
Consequently, the value of $x_{i}^{*}$ can be represented as $x_{i}^{*}= t_i ((\sum_{i=1}^{n}t_i)^{-1}(x))$. 
Furthermore, from the strict convexity of $h_1,\dots,h_n$,  the functions  $ t_1,\dots,t_n$, as well as $\sum_{i=1}^{n} t_i $,  are increasing. Therefore, we obtain the desired result by taking the increasing functions $f_1,\dots,f_n$ specified by  $f_{i}(y)=t_i ((\sum_{i=1}^{n}t_i )^{-1}(y))$ for $y \in [0,1]$ and $i\in [n]$.
\end{proof}

In what follows, $C^2[0,1]$ is the set of all continuous functions on $[0,1]$ 
with continuous second-order derivatives on $(0,1)$.

\begin{lemma}\label{le:2}
    Suppose that $h_{i}\in \mathcal{H} \cap C^2[0,1]$ is strictly convex for $i\in[n]$, and that $X\in \mathcal{X}^{\perp}$. 
    \begin{itemize}
        \item[(i)] If $\mathcal X=L^+$, then
         $
        \dsquare_{i=1}^n \rho_{h_{i}}(X) = \dboxminus_{i=1}^n \rho_{h_{i}}(X) = \rho_{g}(X), $
        where $g=\dsquare_{i=1}^n h_{i}$.
        \item[(ii)] If $\mathcal X=L^-$, then
        $
        \dsquare_{i=1}^n \rho_{h_{i}}(X) = \dboxminus_{i=1}^n \rho_{h_{i}}(X) = \rho_{g}(X),
        $
    where $g$ is such that $\tilde g=\lozen_{i=1}^n \tilde{h}_i$.
    \end{itemize}

\end{lemma}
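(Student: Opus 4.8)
The plan is to prove part (i) for $\X=L^+$ in full, and then obtain part (ii) for $\X=L^-$ by the duality $\rho_h(\cdot)=-\rho_{\tilde h}(-\cdot)$. For the reduction I would write any allocation $(X_1,\dots,X_n)\in\mathbb A_n(X)$ of $X\in L^-$ through $Y_i=-X_i\ge 0$, use $\rho_{h_i}(X_i)=-\rho_{\tilde h_i}(Y_i)$, and note that negation preserves counter-monotonicity. This turns the infimum of $\sum_i\rho_{h_i}(X_i)$ into the negative of a supremum of $\sum_i\rho_{\tilde h_i}(Y_i)$ over nonnegative allocations of $-X\in L^+$, with the convex $h_i$ replaced by the concave duals $\tilde h_i$ and the inf-convolution $\dsquare$ of functions replaced by the sup-convolution $\lozen$. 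Since $\tilde{\tilde g}=g$ and $\rho_g(X)=-\rho_{\tilde g}(-X)$, the target $\tilde g=\lozen_{i=1}^n\tilde h_i$ in (ii) is exactly the mirror of $g=\dsquare_{i=1}^n h_i$ in (i). Thus everything reduces to the $L^+$ case, which I attack by establishing three facts whose combination forces $\dsquare_{i=1}^n\rho_{h_i}(X)=\dboxminus_{i=1}^n\rho_{h_i}(X)=\rho_g(X)$.

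\emph{Fact A: $\dsquare_{i=1}^n\rho_{h_i}=\dboxminus_{i=1}^n\rho_{h_i}$.} One inequality is automatic since $\mathbb A_n^-(X)\subseteq\mathbb A_n(X)$. For the reverse, take any $(X_1,\dots,X_n)\in\mathbb A_n(X)$; because $\X=L^+$ forces $X_i\ge 0$, and because $X\in\X^{\perp}$, the counter-monotonic improvement theorem (Theorem \ref{theorem:counter_impro}) produces a counter-monotonic jackpot allocation $(Y_1,\dots,Y_n)\in\mathbb A_n(X)$ with $Y_i\geq_{\mathrm{cx}}X_i$ and $Y_i\ge 0$. A convex $h_i$ makes $\rho_{h_i}$ consistent with the \emph{reverse} convex order (equivalently $\tilde h_i$ is concave and $\rho_{h_i}(\cdot)=-\rho_{\tilde h_i}(-\cdot)$), so $Y_i\geq_{\mathrm{cx}}X_i$ gives $\rho_{h_i}(Y_i)\le\rho_{h_i}(X_i)$. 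Summing and taking the infimum over $\mathbb A_n(X)$ yields $\dboxminus_{i=1}^n\rho_{h_i}\le\dsquare_{i=1}^n\rho_{h_i}$, hence equality.

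\emph{Fact B (lower bound): $\dboxminus_{i=1}^n\rho_{h_i}(X)\ge\rho_g(X)$.} I would take an arbitrary counter-monotonic allocation and put it into the normal form of Proposition \ref{prop:counter}. In form \eqref{eq:counter_form1}, $X_i=(X-m)\mathbb{1}_{A_i}+m_i$ with $m=\sum_i m_i\le\essinf X$; nonnegativity forces $m_i\ge 0$ and $m\ge 0$, and translation invariance (here $h_i(1)=1$) gives $\sum_i\rho_{h_i}(X_i)=\sum_i\rho_{h_i}((X-m)\mathbb{1}_{A_i})+m$. Since $X-m\ge 0$ and $x>0$, the events $\{(X-m)\mathbb{1}_{A_i}\ge x\}$ are disjoint with union $\{X-m\ge x\}$, so by the very definition of $g=\dsquare_{i=1}^n h_i$ (the easy half of Lemma \ref{lem:new}) the integrand obeys $\sum_i h_i(\p((X-m)\mathbb{1}_{A_i}\ge x))\ge g(\p(X-m\ge x))$ pointwise; integrating in $x$ gives $\sum_i\rho_{h_i}((X-m)\mathbb{1}_{A_i})\ge\rho_g(X-m)$. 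Using the non-normalized translation rule $\rho_g(X-m)=\rho_g(X)-m\,g(1)$ together with $g(1)\le 1$ and $m\ge 0$, I obtain $\sum_i\rho_{h_i}(X_i)\ge\rho_g(X)+m(1-g(1))\ge\rho_g(X)$; form \eqref{eq:counter_form2} is handled by the same bookkeeping.

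\emph{Fact C (upper bound): $\dboxminus_{i=1}^n\rho_{h_i}(X)\le\rho_g(X)$.} This is the crux and where I expect the real work. By Lemma \ref{le:1}, the pointwise optimizer of $g=\dsquare_{i=1}^n h_i$ is given by increasing functions $f_i$ with $\sum_i f_i=\mathrm{id}$ and $g(x)=\sum_i h_i(f_i(x))$. Writing $S_X(x)=\p(X\ge x)$ and using the independent uniform $U$ (available since $X\in\X^{\perp}$), I would build a jackpot allocation $X_i=X\mathbb{1}_{A_i}$ realizing the prescribed marginals $\p(X_i\ge x)=f_i(S_X(x))$: assign $\omega$ to $A_i$ according to which of the subintervals of $[0,1]$ of lengths $f_1'(S_X(X)),\dots,f_n'(S_X(X))$ the value $U(\omega)$ falls into. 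These lengths are nonnegative and sum to one precisely because the $f_i$ are increasing with $\sum_i f_i=\mathrm{id}$, and changing variables from $x$ to $p=S_X(x)$ shows $\p(A_i\cap\{X\ge x\})=f_i(S_X(x))$, whence $\sum_i\rho_{h_i}(X_i)=\int_0^\infty\sum_i h_i(f_i(S_X(x)))\,\d x=\rho_g(X)$. The monotonicity of the $f_i$ is exactly what nests the pointwise-optimal splits consistently across all levels $x$ into a single partition, and the independent $U$ is what makes the construction work when $X$ has atoms. The delicate points I expect to manage are this atom handling and the degeneracy when some $h_i$ is affine on a subinterval (including the identity), so that $h_i'$ is not invertible and $f_i$ is only weakly increasing; I would treat these via the $C^2$ regularity and, if necessary, a perturbation/approximation argument. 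Combining Facts A--C gives $\rho_g\le\dsquare_{i=1}^n\rho_{h_i}=\dboxminus_{i=1}^n\rho_{h_i}\le\rho_g$, and hence the claimed equalities; part (ii) follows by running the same three-part argument on the negated allocation, as described above.
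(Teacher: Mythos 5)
Your Facts A and C, together with the negation/duality reduction for part (ii), are correct and essentially identical to the paper's own proof: the paper also (a) uses Theorem \ref{theorem:counter_impro} plus the fact that convex $h_i$ makes $\rho_{h_i}$ antitone in convex order to pass to jackpot allocations, (b) obtains the lower bound $\rho_g(X)$ from the pointwise inf-convolution identity (Lemma \ref{lem:new}), and (c) attains $\rho_g(X)$ by exactly your construction, partitioning $[0,1]$ into intervals of lengths $f_i'(1-U_X)$ and using the independent uniform $U$; part (ii) is likewise handled as the mirror image with $\tilde h_i$ and sup-convolutions. The genuine gap is in your Fact B. You prove the lower bound by putting an \emph{arbitrary} counter-monotonic allocation into the normal form of Proposition \ref{prop:counter}, but that proposition assumes $n\ge 3$ \emph{and} at least three non-degenerate components. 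It therefore says nothing about $n=2$, nor about allocations such as $(X_1, X-X_1, 0,\dots,0)$ with $X_1$ and $X-X_1$ anti-monotone, which are counter-monotonic but need not have the jackpot-plus-constants structure; for these your argument establishes no bound at all. Separately, the claim that form \eqref{eq:counter_form2} is ``handled by the same bookkeeping'' is not right: inside $\X=L^+$ that form is a scapegoat-type structure with $X-m\le 0$, so for $x>0$ the events $\{(X-m)\mathbb{1}_{A_i}\ge x\}$ are all empty, the risk values are computed from the negative half-line, and the relevant pointwise quantity is a \emph{sup}-convolution of the duals $\tilde h_i$, not $\dsquare_{i=1}^n h_i$. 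Showing such allocations cost at least $\rho_g(X)$ requires a genuinely different comparison between $\lozen_{i=1}^n \tilde h_i$ and $\dsquare_{i=1}^n h_i$ under the nonnegativity constraints on the $m_i$ (it is true, but it does not follow from the form-\eqref{eq:counter_form1} computation).

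The gap is repairable inside your own scheme, and the repair is precisely what the paper does: you never need a lower bound over \emph{all} counter-monotonic allocations. The improvement step in your Fact A applies to every allocation in $\mathbb{A}_n(X)$ --- in particular to every counter-monotonic one --- and yields a jackpot allocation $(X\mathbb{1}_{A_1},\dots,X\mathbb{1}_{A_n})$ with weakly smaller total risk value; since jackpot allocations are themselves counter-monotonic, it follows that $\dsquare_{i=1}^n \rho_{h_i}(X) = \dboxminus_{i=1}^n \rho_{h_i}(X) = \inf$ over jackpot allocations. For jackpot allocations your disjointness argument works verbatim with $m=m_i=0$: for each $x>0$ the events $\{X\mathbb{1}_{A_i}\ge x\}$ are disjoint with union $\{X\ge x\}$, so $\sum_i h_i(\p(X\mathbb{1}_{A_i}\ge x)) \ge g(\p(X\ge x))$ pointwise and the infimum is at least $\rho_g(X)$. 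With that substitution for Fact B --- and dropping Proposition \ref{prop:counter} entirely --- your proof coincides with the paper's.
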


\begin{proof}
    (i) 
    Suppose that $\mathcal X=L^+$ and $X\in \mathcal X^{\perp}$. By Theorem \ref{theorem:counter_impro}, for any allocation $(X_{1}, \ldots, X_{n})\in \mathbb{A}_{n}(X)$,  
    there exists a jackpot allocation $(X\mathbb{1}_{A_{1}}, \ldots, X\mathbb{1}_{A_{n}} )$, $(A_{1}, \ldots, A_{n})\in \Pi_{n}$ such that $X\mathbb{1}_{A_{i}}\leq_{\mathrm{cv}} X_i$ for each $i\in [n]$.  Since $h_{i}\in \mathcal{H}$ for $i\in [n]$ are convex, there exists $(A_{1}, \ldots, A_{n})\in \Pi_{n}$ such that 
    \begin{align*}
        \sum_{i=1}^{n} \rho_{h_{i}}(X\mathbb{1}_{A_{i}}) = 
        \sum_{i=1}^{n} \int_{0}^{\infty} h_{i}(\mathbb{P}(X\mathbb{1}_{A_{i}} \geq x)) \mathrm{d} x 
        \leq
        \sum_{i=1}^{n} \rho_{h_{i}}(X_{i})
    \end{align*}
    holds for $(X_{1}, \ldots, X_{n})\in \mathbb{A}_{n}(X)$.
   Taking the infimum on both sides yields 
   \begin{align*}
       \inf \left\{\int_{0}^{\infty}\sum_{i=1}^{n} h_{i}(\mathbb{P}(X\mathbb{1}_{A_{i}} \geq x)) \mathrm{d} x: (A_{1}, \ldots, A_{n})\in \Pi_{n} \right\}
       \leq \dsquare_{i=1}^n \rho_{h_{i}}(X) \leq \dboxminus_{i=1}^n \rho_{h_{i}}(X),
   \end{align*}
and the above inequalities are in fact equalities since 
$(X\mathbb{1}_{A_{1}}, \ldots, X\mathbb{1}_{A_{n}} )\in \mathbb A^-_n(X)$. Now, let $g=\dsquare_{i=1}^n h_{i}$. 
 Using Lemma \ref{lem:new}, 
 the above inequalities imply that
   \begin{align*}
       \rho_{g}(X)=\!
       \int_{0}^{\infty}\! \inf \left\{\sum_{i=1}^{n} h_{i}(\mathbb{P}(X\mathbb{1}_{A_{i}} \geq x)):\! (A_{1},\! \ldots,\! A_{n})\in \Pi_{n}  \right\} \mathrm{d} x
       \leq \dsquare_{i=1}^n \rho_{h_{i}}(X) \leq \dboxminus_{i=1}^n \rho_{h_{i}}(X).
   \end{align*}
  Thus, we obtain that $\rho_{g}(X) \leq \dboxminus_{i=1}^n \rho_{h_{i}}(X)$.

    Next, we show the converse direction, that is, $\rho_g(X)$ is attainable by some allocation. Since $h_{i}$ is strictly convex and differentiable for each $i\in[n]$, there exist increasing functions $f_{i}$, $i\in[n]$, such that $\dsquare_{i=1}^{n}h_{i}(x)=\sum_{i=1}^{n}h_{i}(f_{i}(x))$ and $\sum_{i=1}^{n}f_{i}(x)=x$, $x\in [0,1]$, as stated in Lemma \ref{le:1}. Define the events $A_1,\dots,A_n$ by
 \begin{align*}
     A_{i}&=\left\{1-\sum_{j=1}^{i}f_{j}^{\prime}(1-U_X)<U\leq 1-\sum_{j=1}^{i-1}f_{j}^{\prime}(1-U_X)\right\}, ~~~ i\in [n-1],\\ 
     A_{n}&=\left\{U\leq 1-\sum_{j=1}^{n-1}f_{j}^{\prime}(1-U_X)\right\}.
 \end{align*}
 Note that the strict convexity of $h_i\in C^2[0,1]$ guarantees the existence of $f_i'$ on $(0,1)$.
By the above construction, it is straightforward to verify that $(A_{1}, \dots, A_{n})$ is a composition of $\Omega$, since $\sum_{i=1}^{n}f_{i}^{\prime}(x) =1$ for $x \in(0, 1)$. Moreover, for $i\in [n]$ and $x>0$,
$\p(A_i|X\ge x) = \E[f'_i(1-U_X)|X\ge x]$, due to independence between $U$ and $X$. Consider the allocation $(X_{1}, \dots,X_{n})=(X\mathbb{1}_{A_{1}}, \dots, X\mathbb{1}_{A_{n}})$, which is in  $\mathbb{A}_{n}^{-}(X)$.
Note that $f_i(y)\to 0$ for $y\downarrow 0$ because $f_i(x)\le x$ for $x\in (0,1)$. Then 
\begin{align}\notag
    \sum_{i=1}^{n} \rho_{h_{i}}(X\mathbb{1}_{A_{i}})=&\sum_{i=1}^{n} \int_{0}^{\infty}h_{i}(\mathbb{P}(X\mathbb{1}_{A_{i}} \geq x)) \mathrm{d} x\\ \notag
 =& \sum_{i=1}^{n} \int_{0}^{\infty}h_{i}(\mathbb{P}(A_{i}|X \geq x)\mathbb{P}(X \geq x)) \mathrm{d} x\\ \notag
 =& \sum_{i=1}^{n} \int_{0}^{\infty}h_{i}\left(\mathbb{E}[f_{i}^{\prime}(1-U_{X})|X\geq x]\mathbb{P}(X \geq x)\right)\mathrm{d} x\\ \notag
  =& \sum_{i=1}^{n} \int_{0}^{\infty}h_{i}\left(\mathbb{E}\left[f_{i}^{\prime}(1-U_{X})\mathbb{1}_{\left\{F_X^{-1}(U_X)\geq x\right\}}\right]\right)\mathrm{d} x\\ \label{eq:11}
=& \sum_{i=1}^{n} \int_{0}^{\infty}h_{i}\left(\int_{\mathbb{P}(X\leq x)}^{1} f_{i}^{\prime}(1-t)\d t\right)\mathrm{d} x\\ \notag
=& \sum_{i=1}^{n} \int_{0}^{\infty}h_{i}\left( f_{i}(S_X(x))\right) \mathrm{d} x\\ \notag
=& \int_{0}^{\infty} \dsquare_{i=1}^{n}h_{i}(S_X(x)) \mathrm{d} x = \rho_{g}(X).
\end{align}

The equality \eqref{eq:11} holds due to the equivalence of  $F_X^{-1}(U_X)\geq x$ and $U_X \geq \mathbb{P}(X\leq x)$ (see \citet[Lemma 1]{guan2024reverse}).
Hence, the result implies that $\dboxminus_{i=1}^{n}\rho_{h_{i}}(X)\leq \rho_{g}(X)$. Combining the above, we obtain $\dboxminus_{i=1}^{n}\rho_{h_{i}}(X)= \rho_{g}(X)$.

\medskip

   (ii) This part follows by symmetric arguments to part (i). 
   Denote by $\ell_{i}(x)=-\tilde{h}_{i}(x)$ and $h^*(x)=\dsquare_{i=1}^n \ell_{i}(x)$. 
   By straightforward calculations, we have 
   \begin{align*}
       \dsquare_{i=1}^{n}\rho_{h_i}(X) &=\inf \left\{\sum_{i=1}^{n}\rho_{h_i}(X_i), (X_1, \cdots, X_n)\in \mathbb{A}_n(X) \right\} 
       \\ &=
       \inf \left\{\sum_{i=1}^{n}\rho_{\ell_{i}}(-X_i), (-X_1, \cdots, -X_n)\in \mathbb{A}_n(-X) \right\}\\
       &= \rho_{h^*}(-X)=\rho_{g}(X),
   \end{align*}
   where $g(x)=h^*(1-x)-h^*(1)=\dsquare_{i=1}^n \ell_{i}(1-x)-\dsquare_{i=1}^n \ell_{i}(1)=
       \lozen_{i=1}^n \tilde{h}_{i}(1)-
       \lozen_{i=1}^n \tilde{h}_{i}(1-x).$   
       The last but one equality follows from the case of $\mathcal{X}=L^{+}$.   
\end{proof}

\begin{proof}[Proof of Theorem \ref{thm:convex}]
The main idea of the proof is to approximate a general function $h_i$ by its Bernstein polynomial $B_k^i$ for $k\ge 1$, which is twice differentiable, and then to apply Lemma \ref{le:2}. 
We only prove case (i).  Fix $i\in [n]$ for now. 
    For the continuous function $h_i $ on $[0,1]$, and for $\epsilon>0$, there exists an integer $N_i\geq 2$ such that   for $k>N_i$, \begin{align}\label{ineq:3}
       \sup_{x\in [0,1]} |h_i(x)-B_k^i(x)|<\epsilon,\  \text{where}\  B_k^i( x)=\sum_{r=0}^k h_i \left(\frac{r}{k}\right)\binom{k}{r} x^r(1-x)^{k-r},\  x\in [0,1];
    \end{align}
   see \citet[Theorem 7.1.5]{phillips2003interpolation}.
    Clearly, for all $k\geq 1$, $B_k^i(0) = h_i(0)=0$ and $B_k^i(1) = h_i(1)=1$. Also, since $h_i$ is increasing and convex,   $B_k^i$ is also increasing and convex (see \citet[Theorem 7.1.4]{phillips2003interpolation}).
Let $\tilde B_k^i
(x)= (1-\epsilon)B_k^i(x) + \epsilon x^2
$ for $x\in [0,1]$ and $k\ge 1$. It is clear that $\tilde B_k^i$ is a strictly convex and increasing function 
such that  
$|\tilde B_k^i-B^i_k|\le \epsilon$,  
$\tilde B_k^i(0)=0$
and $\tilde B_k^i(1)=1$.

    Since $X$ is bounded, we assume that $X\leq M$, for some constant $M\geq 0$.
    Fix $k>\max\{N_1,\dots,N_n\}$. It holds that for all $Y$ supported on $[0,M]$,
     \begin{align*}
        |\rho_{h_i}(Y)-\rho_{B_k^i}(Y)|  \leq 
        \int_{0}^{M}\left| h_{i}(\mathbb{P}(Y \geq x))- B_k^i (\mathbb{P}(Y \geq x))\right| \mathrm{d} x
        <M\epsilon,
    \end{align*} 
    for all $i\in [n]$, and hence,
    $$
    |\rho_{h_i}(Y)-\rho_{\tilde B_k^i}(Y)|  \leq 2M\epsilon ~~\text{and}~~
    \left|\dboxminus_{i=1}^{n} \rho_{h_i}(Y) -\dboxminus_{i=1}^{n} \rho_{\tilde B_k^i} (Y)\right|<2 n M\epsilon.
    $$ 
    Note that  $\tilde B_k^i \in \mathcal H\cap C^2[0,1]$ for $i\in [n]$. It follows that $\dboxminus_{i=1}^{n} \rho_{\tilde B_k^i}(X) =\rho_{g_k}(X) $, where $g_k=\dsquare_{i=1}^{n}\tilde B_k^i$, by 
    Lemma \ref{le:2}. 
Using the inequality \eqref{ineq:3} again, and writing $h=\dsquare_{i=1}^{n}h_i$, we have
    \begin{align*}
    \left\Vert h-g_k \right\Vert=\left\Vert\dsquare_{i=1}^{n}h_i-\dsquare_{i=1}^{n}\tilde B_k^i\right\Vert<2 n\epsilon,
    \end{align*}  
    where $\Vert\cdot\Vert$ is the supremum norm for continuous functions on $[0,1]$.
    Because $\mathcal X=L^+$ and  $X\in \mathcal{X}^{\perp}$, it follows that
    \begin{align*}
        \left|\dboxminus_{i=1}^{n} \rho_{h_i}(X) -\rho_h(X) \right|&=\left|\dboxminus_{i=1}^{n} \rho_{h_i}(X) -\rho_{g_k}(X) +\rho_{g_k}(X) -\rho_h(X) \right|\\
        &\leq \left|\dboxminus_{i=1}^{n} \rho_{h_i}(X) -\rho_{g_k}(X) \right|+\left|\rho_{g_k}(X)-\rho_h(X)\right|\le 4nM\epsilon.
    \end{align*}
    Thus, (i) holds true.
    The proof of (ii) follows from noticing that $-X$ satisfies the assumptions of (i). 
\end{proof}

\subsection{Sharing a constant payoff}
 
We take a closer look at the case $\X=L^+$ and when the aggregate risk $X$ is a constant. By \eqref{eq:convex1} and the positive homogeneity of distortion risk metrics, we set $X=1$ without loss of generality. In this case, Theorem \ref{thm:convex} implies that a class of Pareto-optimal allocations takes the form of  $X_i=\mathbb{1}_{A_{i}}$ for $i\in [n]$, where $(A_1,\dots, A_n)\in \Pi_n$  satisfies $\dsquare_{i=1}^{n}h_i(1)=\sum_{i=1}^{n}h_i(\mathbb{P}(A_i))$. As we can see, the optimal partition depends entirely on the agents' distortion functions $h_i$, which capture their individual risk preferences. Thus, in a setting with no aggregate uncertainty, the optimal allocation assigns to each risk-seeking agent $i$ a probability $\mathbb{P}(A_i)$ of bearing the total loss. This highlights a key economic feature of risk sharing among risk-seeking agents: optimal allocations concentrate the loss rather than diversify it, and each agent receives a ``probability share'' of the full loss according to the structure of the solution. Essentially, the problem boils down to a ``probability sharing'' problem among agents with different attitudes toward risk. It is therefore natural to conjecture that an efficient sharing rule tends to assign higher probabilities of bearing the loss to agents with stronger risk–seeking preferences. However, as shown in Proposition \ref{prop:power}, such monotonicity can fail in the case of two agents. The following result focuses on agents whose preferences are represented by convex power distortions, and it illustrates how the agents' attitudes toward risk shape the structure of an optimal allocation within this particular class.

\begin{proposition}\label{prop:power}
Let $n \in \mathbb{N}$,  
$\alpha_1 \leq \dots \leq \alpha_n \in (1, \infty)$, 
and  $h_i(x)=x^{\alpha_i}$ for $x\in [0,1]$.  Suppose that $(x_1, \cdots, x_n)\in [0,1]^n$ satisfies $\sum_{i=1}^{n} x_i=1$ and $\sum_{i=1}^{n}h_i(x_i)=\dsquare_{i=1}^{n} h_i(1)$. 
\begin{itemize}
    \item[(i)] In case $n=2$, we have 
    $x_1\leq x_2 $ if and only if $ {\alpha_2}2^{-\alpha_2}\leq {\alpha_1}2^{-\alpha_1}$.
\item[(ii)] In case $n\geq 3$, we have  $x_1 \leq \dots \leq x_n$.
\end{itemize}
\end{proposition}

\begin{proof}    
    Our goal is to find $x_1,\dots, x_n$ such that $\sum_{i=1}^{n}x_i=1$ and $\dsquare_{i=1}^{n}h_i(1)=\sum_{i=1}^{n}h_i(x_i)$; that is, to solve
    \begin{align}
    \label{eq:solve-w}
        \argmin_{w_1, \dots, w_{n}} \left\{\sum_{i=1}^{n}(w_i)^{\alpha_i}: w_1, \dots, w_{n}\geq 0\ \text{and}\  \sum_{i=1}^{n}w_i = 1\right\}.
    \end{align}

    First, we consider the case of $n=2$. The optimal share $(x_1, x_2)\in [0,1]^2$
    can be solved by the first-order condition of $\alpha_1 {x_1}^{\alpha_1-1}-\alpha_2 (1-x_1)^{\alpha_2-1}=0$. Let $m=\frac{x_1}{1-x_1}$. The condition can be written as:
    $$
    m^{\alpha_1-1}(1+m)^{\alpha_2-\alpha_1}=\frac{\alpha_2}{\alpha_1}.
    $$
    Clearly, $F(y):=y^{\alpha_1-1}(1+y)^{\alpha_2-\alpha_1}$ is non-decreasing in $y$ and $F(1)=2^{\alpha_2-\alpha_1}$. Therefore, the condition of $ {\alpha_2}/{\alpha_1}\leq F(1)$ is equivalent with $m\leq 1$, implying that $x_1\leq 1/2$ and $x_1 \leq x_2$.
    
    Next, we consider the case of $n\geq 3$.
    To solve the problem, we define the Lagrangian as
    \begin{align}\label{eq:la}
        \mathcal{L}(w_1, \dots, w_n;\lambda)=\sum_{i=1}^{n}w_i^{\alpha_i} + \lambda \left(1-\sum_{i=1}^{n}w_i\right),\  \lambda\in \mathbb{R}.
    \end{align}
    The first-order condition for \eqref{eq:la} is given by 
    \begin{align}\label{eq:4}
        \alpha_1 {x_1}^{\alpha_1-1}= \dots =\alpha_n {x_n}^{\alpha_n-1} = \lambda,\ \text{and}\ \sum_{i=1}^{n}x_i = 1.
    \end{align}
    Thus, it follows that
    \begin{align*} 
        f(\lambda):=\sum_{i=1}^{n} \left(\frac{\lambda}{\alpha_i}\right)^{\frac{1}{\alpha_i-1}}=1.
    \end{align*}
    Let $$y_i=\left(1+\frac{1}{\beta_i}\right)^{-\beta_i}~\text{and}~\beta_i=\frac{1}{\alpha_i-1}.$$
    Following the fact that $g(x):=(1+1/x)^x$ is increasing on $(0,\infty)$ with range $(1,e)$, we obtain that $y_i$ is decreasing in $\beta_i$ and satisfies $y_i>1/e$. Therefore,  $$\sum_{i=1}^{n} y_i=f(1)=\sum_{i=1}^{n} \left(\frac{1}{\alpha_i}\right)^{\frac{1}{\alpha_i-1}}> \frac{n}{e} \ge \frac{3}{e}>1.$$
    Since $f(\lambda)$ is strictly increasing with $\lambda$, the solution of   $f(\lambda)=1$ satisfies $\lambda<1$.
    Consequently, it follows that $x_i=\lambda ^{\beta_i}y_i$ is decreasing in $\beta_i$, and hence increasing in $\alpha_i$.
\end{proof}

Proposition \ref{prop:power} shows that when $n\geq3$, optimal allocations always assign higher probabilities of bearing the loss to agents with more risk-seeking distortion functions. The ordering of probabilities then aligns naturally with the ordering of $\alpha_i$. However, this monotonicity does not necessarily hold in the two-agent case.
    When $n=2$, the feasibility constraint 
    $x_1+x_2=1$ restricts the set of admissible allocations to be a one-dimensional curve.
    The optimal allocation is therefore determined by comparing the agents' marginal distortion penalties at the symmetric point $x_1=x_2=1/2$.
    As a result, the ordering of $(x_1,x_2)$ is governed not directly by the risk-seeking parameters $(\alpha_1, \alpha_2)$, but by the transformed quantities $(\alpha_1 2^{-\alpha_1},\alpha_2 2^{-\alpha_2})$.
    Since the mapping $x\mapsto x 2^{-x}$
    is not monotone on $(1,\infty)$, it is possible for the efficient allocation to assign a larger probability of bearing the loss to the less risk-seeking agent. 
    For example, when $\alpha_2 2^{-\alpha_2}>\alpha_1 2^{-\alpha_1}$, as in the case $h_1(x)=x^{1.2}$ and $h_2(x)=x^{1.4}$, the efficient solution assigns a larger probability of loss to the less risk-seeking agent, as shown in the Table \ref{tab:2}.  In contrast, when the ordering reverses, such as the pair of $h_1(x)=x^{1.2}$ and $h_2(x)=x^5$, the efficient allocation assigns a larger probability to the more risk-seeking agent.
    This counterintuitive outcome arises from a structural feature of the two-agent setting: with only one degree of freedom, increasing one agent's share automatically decreases the other's. 
    Consequently, the exact curvature of the distortion functions near the symmetric split becomes more influential than the overall level of risk seeking, and the less risk-seeking agent may end up bearing more of the loss.

\begin{table}[ht]\vspace{0.2cm}
\centering
\renewcommand{\arraystretch}{1.5}  
\begin{tabular}{cc|ccc}
\hline
$h_1$ & $h_2$ & $\mathbb{P}(A_1)$& $\mathbb{P}(A_2)$ & $\rho_{h_1}\dsquare \rho_{h_2}(1)$ \\ \hline
 $x^{1.2}$    &   $x^{1.4}$    &  0.5129       &  0.4871 &   0.8141   \\ \hline
  $x^{1.2}$   & $x^{5}$     &    0.3371     &   0.6629      &    0.3992  \\ \hline
\end{tabular}
\caption{Comparison of the inf-convolution and the optimal allocation with different $h_1$ and $h_2$.}\label{tab:2}
\end{table}

\section{Conclusion}\label{sec:con}
The counter-monotonic improvement theorem (reported as Theorem \ref{theorem:counter_impro} herein) not only provides insights into solving risk sharing problem among non-risk-averse agents, but also serves as the foundation for studying a counter-monotonic risk exchange mechanism, in which only counter-monotonic risk allocations are allowed. In this paper, we investigate the counter-monotonic risk sharing problem for heterogeneous underlying risk measures $\rho_{h_1}, \dots, \rho_{h_n}$.

When all agents are risk averse, meaning that $h_i$ is concave for each $i \in [n]$, the following relationship holds (Theorem \ref{thm:1} and Corollary \ref{co:1}): 
\begin{align*}
\dboxminus_{i=1}^{n}\rho_{h_i}\geq  \dboxplus_{i=1}^{n}\rho_{h_i}=\rho_h=\dsquare_{i=1}^{n}\rho_{h_i},\ \text{where}\ h=\bigwedge_{i=1}^{n}h_i.
\end{align*}
In this case, comonotonic allocations are always preferred, implying that the counter-monotonic inf-convolution generally yields a larger value than the other two. As a result, finding a closed-form characterization of the counter-monotonic inf-convolution becomes challenging, and it is not our focus in this paper. However, we provide a sufficient condition under which $\dboxminus_{i=1}^{n}\rho_{h_i}$ is equal to the other two inf-convolutions.

When $h_i$ is convex and continuous on $[0,1]$ for each $i \in [n]$, indicating that agents are risk seeking, the inf-convolution for such distortion risk measures admits explicit formulas for $X\in \mathcal{X}^\perp$: 
\begin{align*}
\dsquare_{i=1}^n \rho_{h_{i}}(X) = \dboxminus_{i=1}^n \rho_{h_{i}}(X) = \rho_{g}(X), 
\end{align*}
where  $g$ is such that (i) $g=\dsquare_{i=1}^n h_{i}$ if $\mathcal X=L^+$; and (ii) $\tilde g=\lozen_{i=1}^n \tilde{h}_i$ if $\mathcal X=L^-$. In this setting, these results show that a representative agent (defined by the inf-convolution as its reference) of several agents with distortion risk measures is no longer an agent with a distortion risk measure, but rather with a distortion riskmetric.

\section*{Acknowledgements}
 Mario Ghossoub acknowledges financial support from the Natural Sciences and Engineering Research Council of Canada (RGPIN-2024-03744). Ruodu Wang acknowledges financial support from the Natural Sciences and Engineering Research Council of Canada (RGPIN-2024-03728, CRC-2022-00141). 

\bibliography{ref}

\end{document}